\numberwithin{equation}{section}
\newtheorem{theorem}{Theorem}
\newtheorem{algorithm}{Algorithm}
\newtheorem{corollary}[theorem]{Corollary}
\newtheorem{proposition}{Proposition}
\newtheorem{lemma}{Lemma}
\newtheorem{ass}{Assumption}
\theoremstyle{definition}
\newcommand{\bU}{{\boldsymbol{U}}}
\renewcommand{\Pr}{\mathbb{P}}
\begin{document}

\title{Nonparametric Identification in Panels using Quantiles \thanks{%
We thank the editor Qi Li, two anonymous referees, participants at Demand
Estimation and Modelling Conference, and Arthur Lewbel for comments. We
gratefully acknowledge research support from the NSF.}}
\author{Victor Chernozhukov, Ivan Fernandez-Val, Stefan Hoderlein, \\
Hajo Holzmann, Whitney Newey \ }
\date{\today }
\maketitle

\begin{abstract}
This paper considers identification and estimation of ceteris paribus
effects of continuous regressors in nonseparable panel models with time
homogeneity. The effects of interest are derivatives of the average and
quantile structural functions of the model. We find that these derivatives
are identified with two time periods for ``stayers", i.e. for individuals
with the same regressor values in two time periods. We show that the
identification results carry over to models that allow location and scale
time effects. We propose nonparametric series methods and a weighted
bootstrap scheme to estimate and make inference on the identified effects.
The bootstrap proposed allows uniform inference for function-valued
parameters such as quantile effects uniformly over a region of
quantile indices and/or regressor values. An empirical application to Engel curve estimation with panel data
illustrates the results.
\end{abstract}

\textbf{Keywords:} Panel data, nonseparable model, average effect, quantile
effect, Engel curve

\section{Identification for Panel Regression}

A frequent object of interest is the ceteris paribus effect of $x$ on $y,$
when observed $x$ is an individual choice variable partly determined by
preferences or technology. Panel data holds out the hope of controlling for
individual preferences or technology by using multiple observations for a
single economic agent. This hope is particularly difficult to realize with
discrete or other nonseparable models and/or multidimensional individual
effects. These models are, by nature, not additively separable in unobserved
individual effects, making them challenging to identify and estimate.

A fundamental idea for using panel data to identify the ceteris paribus
effect of $x$ on $y$ is to use changes in $x$ over time. In order for
changes over time in $x$ to correspond to ceteris paribus effects, the
distribution of variables other than $x$ must not vary over time. This
restriction is like \textquotedblleft time being randomly
assigned\textquotedblright\ or "time is an instrument."\ In this paper we
consider identification via such time homogeneity conditions. They are also
the basis of many previous panel results, including Chamberlain (1982),
Manski (1987), and Honore (1992). Recently time homogeneity has been used as
the basis for identification and estimation of nonseparable models by
Chernozhukov, Fernandez-Val, Hahn, Newey (2013), Evdokimov (2010), Graham
and Powell (2012), and Hoderlein and White (2012). Because economic data
often exhibits drift over time, we also allow for some time effects, while
maintaining underlying time homogeneity conditions.

In this paper we give identification and estimation results for quantile
effects with time homogeneity and continuous regressors. The effects of
interest are derivatives of quantile structural functions of the model. We
find that these derivatives are identified with two time periods for
\textquotedblleft stayers", i.e. conditional on $x$ being equal in two time
periods. Time homogeneity is too strong for many econometric applications
where time trends are evident in the data. We weaken homogeneity by allowing
for location and scale time effects. Allowing for such time effects makes
identification and estimation more complicated but more widely applicable.
We also give analogous results for conditional mean effects under weaker
identification conditions than previously.

Quantile identification under time homogeneity is based on differences of
quantiles. It is also interesting to consider whether quantiles of
differences can help identify effects of interest. We do not find that time
homogeneity alone can lead to identification from quantiles of differences.
We do give quantile difference identification results that restrict the
distribution of individual effects conditional on $x$, similarly to
Chamberlain (1980), Altonji and Matzkin (2005), and Bester and Hansen
(2009). In our opinion these added restrictions make quantiles of
differences less appealing. We therefore focus for the rest of the paper,
including the application, on differences of quantiles.

To illustrate we provide an application to Engel curve estimation. The Engel
curve describes how demand changes with expenditure. We use data from the
2007 and 2009 waves of the Panel Study of Income Dynamics (PSID).
Endogeneity in the estimation of Engel curves arises because the decision to
consume a commodity may occur simultaneously with the allocation of income
between consumption and savings. In contrast with the previous cross
sectional literature, we do not rely on a two-stage budgeting argument that
justifies the use of labor income as an instrument for expenditure. Instead,
we assume that the Engel curve relationships are time homogeneous up to
location and scale time effects, which leads to identification of structural
effects from panel data.

An alternative approach to identification in panel data is to impose
restrictions on the conditional distribution of the individual effect given $%
x$. This approach leads to nonparametric generalizations of Chamberlain's
(1980) correlated random effects model. As shown by Chamberlain (1984),
Altonji and Matzkin (2005), Bester and Hansen (2009), and others, this kind
of condition leads to identification of various effects. In particular,
Altonji and Matzkin (2005) show identification of an average derivative
conditional on the regressor equal to a specific value, an effect they call
the local average response (LAR). In this paper we take a different
approach, preferring to impose time homogeneity rather than restrict the
relationship between observed regressors and unobserved individual effects.
We refer to Hsiao (2003) for a broader perspective of panel data models.

Section 2 describes the model and gives an average derivative result.
Section 3 gives the quantile identification result that follows from time
homogeneity. Section 4 considers how quantiles of differences can be used to
identify the effect of $x$ on $y$. Section 5 explains how we allow for time
effects. Estimation and inference are briefly discussed in Section 6, and
the empirical example is given in Section 7. The Appendix contains the
proofs of the main results.

\section{The Model and Conditional Mean Effects}

The data consist of $n$ observations on ${\boldsymbol{Y}}_{i}=(Y_{i1},%
\ldots,Y_{iT})^{\prime }$ and ${\boldsymbol{X}}_{i}=[X_{i1}^{\prime
},\ldots,X_{iT}^{\prime}] ^{\prime }$, for a dependent variable $Y_{it}$ and
a vector of regressors $X_{it}$. Throughout we assume that the observations $%
({\boldsymbol{Y}}_{i},{\boldsymbol{X}}_{i})$\textit{, }$(i=1,\ldots,n)$%
\textit{, }are independent and identically distributed. The nonparametric
models we consider satisfy

\bigskip

\begin{ass}
\label{assum:model} There is a function $\phi$ and vectors of random
variables $A_{i}$ and $V_{it}$ such that 
\begin{equation*}
Y_{it}=\phi (X_{it},A_{i},V_{it}),\qquad i=1,\ldots,n, \quad t=1,2, \ldots,
T.
\end{equation*}
\end{ass}

We focus in this paper on the two time period case, $T=2$, though it is
straightforward to extend the results to many time periods. The vector $%
A_{i} $ consists of time invariant individual effects that often represent
individual heterogeneity. The vector $V_{it}$ represents period specific
disturbances. Altonji and Matzkin (2005) considered models satisfying
Assumption 1. As discussed in Chernozhukov et. al. (2013), the invariance of 
$\phi $ over time in this Assumption does not actually impose any time
homogeneity. If there are no restrictions on $V_{it}$ then $t$ could be one
of the components of $V_{it},$ allowing the function to vary over time in a
completely general way. The next condition together with Assumption 1
imposes time homogeneity on the model.

\bigskip

\begin{ass}
\label{ass:distrinv} $V_{it}|{\boldsymbol{X}}_{i},A_{i}\overset{d}{=}V_{i1}|{%
\boldsymbol{X}}_{i},A_{i}$ \quad for all $t.$
\end{ass}

\bigskip

This is a static, or "strictly exogenous" time homogeneity condition, where
all leads and lags of the regressor are included in the conditioning
variable ${\boldsymbol{X}}_{i}.$ It requires that the conditional
distribution of $V_{it}$ given ${\boldsymbol{X}}_{i}$ and $A_{i}$ does not
depend on $t,$ but does allow for dependence of $V_{it}$ over time. This
assumption rules out dynamic models where lagged values of $Y_{it}$ are
included in $X_{it}.$

Setting $U_{it}=(A_{i}^{\prime },V_{it}^{\prime })^{\prime }$, an equivalent
condition is 
\begin{equation*}
U_{it}|{\boldsymbol{X}}_{i}\overset{d}{=}U_{i1}|{\boldsymbol{X}}_{i}.
\end{equation*}%
Thus, the time invariant $A_{i}$ has no distinct role in this model. As
further discussed in Chernozhukov et. al. (2013), this seems a basic
condition that helps panel data provide information about the effect of $x$
on $y.$ It is like the time period being "randomly assigned" or "time is an
instrument," with the distribution of factors other than $x$ not varying
over time, so that changes in $x$ over time can help identify the effect of $%
x$ on $y$.

Although they seem useful for nonlinear models, the time homogeneity
conditions are strong. In particular they do not allow for
heteroskedasticity over time, which is often thought to be important in
applications. We partially address this problem below by allowing for
location and scale time effects.

For notational convenience we shall drop the $i$ subscript and let $T=2$ in
the following. Our focus in this paper is on the case where the regressors ${%
\boldsymbol{X}}$ are continuously distributed. We will be interested in
several effects of ${\boldsymbol{X}}$ on ${\boldsymbol{Y}}$. For $%
u=(a^{\prime },v^{\prime })^{\prime }$ we let $\phi (x,u)=\phi (x,a,v)$. We
will let $x$ or $x_{t}$ denote a possible value of the regressor vector $%
X_{t}$ and ${\boldsymbol{x}}=(x_{1}^{\prime },x_{2}^{\prime })^{\prime }$ a
possible value of ${\boldsymbol{X}}=(X_{1}^{\prime },X_{2}^{\prime
})^{\prime }$. Let $\partial _{x}\phi (x,u)$ denote the vector of partial
derivatives of $\phi $ w.r.t.~the coordinates of $x$. One effect we consider
is a conditional expectation of the derivative $\partial _{x}\phi
(X_{t},U_{t})$ given by 
\begin{equation*}
E\big[\partial _{x}\phi (x_{,}U_{t})|X_{1}=X_{2}=x\big].
\end{equation*}%
This is the object considered in Hoderlein and White (2012) and is similar
to the local average response considered in Altonji and Matzkin (2005). It
gives the local marginal effect for individuals with regressor value $x$ in
both periods. This effect is related to the conditional average structural
function (CASF): 
\begin{equation*}
m(x\mid {\boldsymbol{x}})=E[\phi (x,U_{t})\mid {\boldsymbol{X}}={\boldsymbol{%
x}}],
\end{equation*}%
through 
\begin{equation*}
\partial _{x}m(x\mid {\boldsymbol{x}})\Big|_{{\boldsymbol{x}}=(x,x)}=E\big[%
\partial _{x}\phi (x_{,}U_{t})|X_{1}=X_{2}=x\big],
\end{equation*}%
under the conditions that permit interchanging the derivative and
expectation.

The other effects we consider are similar to this effect except that we also
condition on certain values of $Y_{t}$. One of these is given by 
\begin{equation*}
E\big[\partial _{x}\phi (x,U_{t})|Y_{t}=q(\tau ,x),X_{1}=X_{2}=x\big],
\end{equation*}%
where $q(\tau ,x)$ is the $\tau ^{th}$ conditional quantile of $\phi
(x,U_{t})$ given $X_{1}=X_{2}=x.$ This is a quantile derivative effect,
similar to the local average structural derivative in Hoderlein and Mammen
(2007). It gives the local marginal effect for individuals with regressor
value $x$ in both periods and at the quantile $q(\tau ,x)$. This effect is
also related to the conditional quantile structural function (CQSF), $%
q_{\tau }(x|{\boldsymbol{x}})$, that gives the $\tau $-quantile of $\phi
(x,U_{t})$ conditional on ${\boldsymbol{X}}={\boldsymbol{x}},$ through 
\begin{equation*}
\partial _{x}q_{\tau }(x\mid {\boldsymbol{x}})\Big|_{{\boldsymbol{x}}%
=(x,x)}=E\big[\partial _{x}\phi (x,U_{t})|Y_{t}=q(\tau ,x),X_{1}=X_{2}=x\big]%
.
\end{equation*}

We also consider linking quantiles of arbitrary linear combinations of the
dependent variables $Y_{1}$ and $Y_{2}$ to conditional expectations of the
form 
\begin{equation*}
E\big( \partial_x \phi (x,U_{t})\,|\,\text{linear comb of } {\boldsymbol{Y}}%
, X_{1}=X_{2}=x\big) .
\end{equation*}
These are dependent variable conditioned average effects. One intended
direction is to compare the derivative of the quantiles of the differences $%
Y_{2}-Y_{1}$ to the differences of the derivative of the quantiles of $Y_{2}$
and $Y_{1}$ in terms of objects they identify.  In what follows we carry out the comparison.

To set the stage for the quantile results we first discuss mean
identification. We first give an explanation of identification of the mean
effect and then give a precise result with regularity conditions.

Consider the identified conditional mean 
\begin{equation*}
M_{t}({\boldsymbol{x}})=E(Y_{t}|{\boldsymbol{X}}={\boldsymbol{x}}),\qquad
t=1,2.
\end{equation*}%
Together these conditional expectations are a nonparametric version of
Chamberlain's (1982) multivariate regression model for panel data.
Derivatives of them can be combined to identify the conditional mean effect.
Let $f(u|{\boldsymbol{x}})$ denote the conditional density of $U_{t}$ given $%
{\boldsymbol{X}}={\boldsymbol{x}},$ that does not depend on $t$ by
Assumption \ref{ass:distrinv}. Assume that $\phi (x,u)$ and $f(u|{%
\boldsymbol{x}})$ are differentiable in $x$ and ${\boldsymbol{x}}$
respectively and that differentiation under the integral is permitted. For ${%
\boldsymbol{x}}=(x_{1}^{\prime },x_{2}^{\prime })^{\prime }$ we let $%
\partial _{x_{s}}M_{t}({\boldsymbol{x}})$ and $\partial _{x_{s}}f(u|{%
\boldsymbol{x}})$, $s,t=1,2$, denote the vector of partial derivatives
w.r.t.~the coordinates of $x_{s}$. Then for $s,t=1,2$, 
\begin{align*}
\partial _{x_{s}}M_{t}({\boldsymbol{x}})& =\partial _{x_{s}}E(Y_{t}|{%
\boldsymbol{X}}={\boldsymbol{x}})=\partial _{x_{s}}\int \phi (x_{t},u)f(u|{%
\boldsymbol{x}})du \\
& =1(s=t)\int \partial _{x}\phi (x_{t},u)f(u|{\boldsymbol{x}})du+\int \phi
(x_{t},u)\partial _{x_{s}}f(u|{\boldsymbol{x}})du,
\end{align*}%
where the first term is the conditional mean effect of interest and the
second term is the analog to Chamberlain's (1982) heterogeneity bias.
Subtracting and using Assumption \ref{ass:distrinv} gives 
\begin{equation}
\partial _{x_{2}}M_{2}({\boldsymbol{x}})-\partial _{x_{2}}M_{1}({\boldsymbol{%
x}})=E\big(\partial _{x}\phi (x_{2},U_{t})|{\boldsymbol{X}}={\boldsymbol{x}}%
\big)+\int \big(\phi (x_{2},u)-\phi (x_{1},u)\big)\partial _{x_{2}}f(u|{%
\boldsymbol{x}})du.  \label{cond mean eq}
\end{equation}%
Evaluating at ${\boldsymbol{x}}=(x^{\prime },x^{\prime })^{\prime }$ we find
that%
\begin{equation}
E\big(\partial _{x}\phi (x,U_{t})|X_{1}=X_{2}=x\big)=\partial
_{x_{2}}M_{2}(x,x)-\partial _{x_{2}}M_{1}(x,x)=\partial _{x_{2}}\Delta M(x,x)
\label{eq:resultmean1}
\end{equation}%
where 
\begin{equation*}
\Delta M({\boldsymbol{x}})=E(Y_{2}-Y_{1}|{\boldsymbol{X}}={\boldsymbol{x}}).
\end{equation*}%
It also follows similarly that 
\begin{equation}
\begin{split}
E\big(\partial _{x}\phi (x,U_{t})|X_{1}=X_{2}=x\big)& =-\partial
_{x_{1}}\Delta M(x,x) \\
& =\partial _{x_{1}}E(Y_{1}-Y_{2}|X_{1}=x_{1},X_{2}=x_{2})\big|%
_{(x_{1}^{\prime },x_{2}^{\prime })=(x^{\prime },x^{\prime })}.
\end{split}
\label{eq:resultmean2}
\end{equation}%
Thus, the conditional mean effect is identified from the derivative of the
conditional expectation of the difference with respect to the leading time
period for individuals where $X_{t}$ is the same in both periods. We note
here that this means the conditional mean effect is overidentified.
Introducting time effects, as we do below, will lead to exact
identification. Thus, testing for the presence of time effects is one way of
testing this overidentifying restriction.

The importance of conditioning on the event $x=X_{1}=X_{2}$ can be seen from
equation (\ref{cond mean eq}), where setting $X_{1}=X_{2}$ eliminates
heterogeneity bias. Thus, one can think of the conditioning on $X_{1}=X_{2}$
as a device to eliminate the heterogenity bias in nonseparable models under
time stationarity. In contrast, if $\phi (x,u)$ were additively separable
with $\phi (x,u)=\mu (x)+u,$ the heterogeneity bias would be zero for all $%
X_{1}$ not necessarily equal to $X_{2}$ because $\int \partial _{x_{2}}f(u|{%
\boldsymbol{x}})du=0$. Hence the derivative effect of interest would be $%
\partial_{x_2} \Delta M(\boldsymbol{x})$ for each value of $x_{1}$ and one
could estimate that derivative more precisely by averaging over its first
argument. Also, one could test for whether the model is additively separable
by testing whether $\Delta M(\boldsymbol{x})$ varies with its first
argument, though it is beyond the scope of this paper to analyze such tests.

Conditioning on $x=X_{1}=X_{2}$ does restrict the set over which the
structural derivative is averaged but this can correspond to an interesting
set of individuals. For example, in the Engel curve application we give $x$
is total expenditure so the restriction $X_{1}=X_{2}$ corresponds to
individuals whose total expenditure was the same in the two time periods.
This seems mostly likely to occur for middle aged individuals, which is an
interesting though special group to focus on.

Altonji and Matzkin (2005) are able to identify derivative effects without
conditioning on $X_{1}=X_{2}$ but they also restrict the distribution of $%
U_{t}$ conditional on $\boldsymbol{X}.$ We do not impose such type of
assumptions but instead require time stationarity of the distribution of $%
U_{t}$ conditional on $\boldsymbol{X}$. The different assumptions make it
hard to compare results. We prefer to focus on time stationarity in this
paper, where we do not yet know whether it is possible to identify
interesting effects for continuous regressors without imposing $X_{1}=X_{2}.$

Graham and Powell (2012) consider a linear model with individual specific
coefficients where $\phi (x,u)=\beta _{1}(u)+\beta _{2}(u)x$ in the scalar $x
$ case. In this case%
\begin{equation*}
E\big[\partial _{x}\phi (x_{,}U_{t})|X_{1}=X_{2}=x\big]=E\big[\beta
_{2}(U_{t})|X_{1}=X_{2}=x\big]
\end{equation*}%
Here we find that average slope for the stayer subpopulation with $%
X_{1}=X_{2}$ is identified. Graham and Powell (2012) use linearity of $\phi
(x,u)$ in $x$ to identify the average slope $E\big[\beta _{2}(U_{t})\big]$
over the whole population using the movers with $X_{1}\neq X_{2}$. We
identify an average slope over a smaller population for a fully nonlinear,
nonparametric specification $\phi (x,u)$.

The following result makes the previous derivation precise, including
conditions for differentiating under integrals.

\begin{theorem}
\label{theorem:mean} Suppose that Assumptions \ref{assum:model} and \ref%
{ass:distrinv} are satisfied, $E|Y_{t}|<\infty$, $t=1,2,$ and that $\phi
(x,u)$ (where $u^{\prime }= (a^{\prime }, v^{\prime })$) resp.~the
conditional density $f(u|{\boldsymbol{x}})$ of $U_{t}=(A^{\prime },
V_{t}^{\prime })^{\prime }$ given ${\boldsymbol{X}}={\boldsymbol{x}}$ are
continuously differentiable in $x$ resp.~${\boldsymbol{x}}$ for fixed $u$.
Given $x$, suppose that for some $\varepsilon >0$, 
\begin{align*}
& \int \sup_{\left\Vert \delta \right\Vert \leq \varepsilon, \delta =
(\delta_0^{\prime }, \delta_1^{\prime }, \delta_2^{\prime })^{\prime }}%
\big\Vert \partial_x \phi(x+\delta_0 ,u)\, f(u|x+\delta_{1},x+\delta _{2}) %
\big\Vert du<\infty , \\
& \int \sup_{\left\Vert \delta \right\Vert \leq \varepsilon, \delta =
(\delta_0^{\prime }, \delta_1^{\prime }, \delta_2^{\prime })^{\prime }}%
\big\Vert \phi(x+\delta_0 ,u)\, \partial_{x_s} f(u|x+\delta_{1},x+\delta
_{2}) \big\Vert du<\infty, \quad s=1,2,
\end{align*}
then (\ref{eq:resultmean1}) and (\ref{eq:resultmean2}) hold true.
\end{theorem}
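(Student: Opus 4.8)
The plan is to make the informal derivation that precedes equation (\ref{cond mean eq}) rigorous by justifying the interchange of differentiation and integration through a dominated–convergence (Leibniz rule) argument; once that step is in place, (\ref{eq:resultmean1}) and (\ref{eq:resultmean2}) follow from elementary algebra together with Assumption \ref{ass:distrinv}. So the proof has a ``hard'' analytic core (differentiating under the integral) and a ``soft'' algebraic tail (subtract and evaluate at $x_1=x_2=x$).

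First I would record the representation $M_t({\boldsymbol{x}})=\int \phi(x_t,u)\,f(u\mid{\boldsymbol{x}})\,du$. Since $Y_t=\phi(X_t,U_t)$ with $U_t=(A',V_t')'$ by Assumption \ref{assum:model}, this is the law of iterated expectations: conditioning on ${\boldsymbol{X}}={\boldsymbol{x}}$ freezes $X_t$ at $x_t$, and the hypothesis $E|Y_t|<\infty$ ensures the conditional expectation is finite, so the integral is well defined. Here $f(u\mid{\boldsymbol{x}})$ is the conditional density of $U_1\mid{\boldsymbol{X}}={\boldsymbol{x}}$, which by Assumption \ref{ass:distrinv} equals that of $U_t\mid{\boldsymbol{X}}={\boldsymbol{x}}$ for every $t$; this is the \emph{only} place Assumption \ref{ass:distrinv} is used, and it is precisely what makes the density common to $M_1$ and $M_2$.

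Next I would differentiate $M_t$ in the coordinates of $x_s$, $s\in\{1,2\}$, near ${\boldsymbol{x}}=(x',x')'$. Fix a coordinate direction $e_j$ and a small scalar $h$; the difference quotient of the integrand is $h^{-1}\{\phi(x_t+1(s=t)he_j,u)f(u\mid x_1+1(s=1)he_j,x_2+1(s=2)he_j)-\phi(x_t,u)f(u\mid{\boldsymbol{x}})\}$. Applying the mean value theorem to this as a function of the step size (legitimate because $\phi$ and $f$ are $C^1$) and expanding by the product rule, the quotient is a sum of a term involving $\partial_x\phi$ (present only when $s=t$) and a term involving $\partial_{x_s}f$, each evaluated at an intermediate point within the $\varepsilon$-ball as long as $|h|\le\varepsilon$. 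Hence for $|h|\le\varepsilon$ the quotient is dominated pointwise in $u$ by the sum of the two $\delta$-suprema appearing in the displayed hypotheses, and those hypotheses assert exactly that these dominating functions are integrable. Dominated convergence then lets $h\to0$ under the integral, yielding
\begin{equation*}
\partial_{x_s}M_t({\boldsymbol{x}})=1(s=t)\int\partial_x\phi(x_t,u)\,f(u\mid{\boldsymbol{x}})\,du+\int\phi(x_t,u)\,\partial_{x_s}f(u\mid{\boldsymbol{x}})\,du
\end{equation*}
for all ${\boldsymbol{x}}$ in a neighborhood of $(x',x')'$, in particular at $(x',x')'$.

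Finally I would take $s=2$, subtract the $t=1$ identity from the $t=2$ identity, and use that $f(u\mid{\boldsymbol{x}})$ is common to obtain (\ref{cond mean eq}); at $x_1=x_2=x$ the heterogeneity integral carries the factor $\phi(x_2,u)-\phi(x_1,u)=0$ and vanishes, giving (\ref{eq:resultmean1}). Equation (\ref{eq:resultmean2}) is identical with $x_1$ and $x_2$ interchanged, i.e.\ $s=1$. I expect the only genuinely delicate point to be the passage to the limit under the integral sign: one must check that the $\varepsilon$-ball covers every intermediate point the mean value theorem produces for all $|h|\le\varepsilon$ and all coordinate directions acting on both $x_t$ and $x_s$, and that the product-rule split lines up term-by-term with the two separate domination conditions; with this bookkeeping in place the argument is routine.
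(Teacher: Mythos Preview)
Your proposal is correct and follows essentially the same line as the paper's proof: justify differentiation under the integral via the dominated-convergence/Leibniz argument using the two displayed integrability hypotheses, obtain the formula $\partial_{x_s}M_t({\boldsymbol{x}})=1(s=t)\int\partial_x\phi(x_t,u)f(u\mid{\boldsymbol{x}})du+\int\phi(x_t,u)\partial_{x_s}f(u\mid{\boldsymbol{x}})du$, then subtract and evaluate at $x_1=x_2=x$. The only cosmetic difference is that the paper introduces an auxiliary three-argument function $\tilde{M}(x,x_1,x_2)=\int\phi(x,u)f(u\mid x_1,x_2)\,du$ and writes $M_t({\boldsymbol{x}})=\tilde{M}(x_t,x_1,x_2)$, which makes the chain rule bookkeeping slightly more transparent than your direct computation, but the content is identical.
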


%

This result has slightly weaker conditions than that of Hoderlein and White
(2012). Here we drop their assumption that $V_{t}$ is independent of $X_{1}$
conditional on $A$. The result given here allows for $X_{1}$ to be
correlated with $(V_{1},V_{2}),$ as long as the marginal distribution of $%
V_{t}$ conditional on $(X_{1},X_{2},A )$ does not vary with $t.$ We maintain
these weaker conditions as we consider identification of quantile effects in
the next Section.

\section{Conditional Quantile Effects}

\label{sec:quant}

Turning now to the identification of the quantile effects given above, let $%
Q_{t}(\tau \mid {\boldsymbol{x}})$ denote the $\tau ^{th}$ conditional
quantile of $Y_{t}$ conditional on ${\boldsymbol{X}}={\boldsymbol{x}}%
=(x_{1}^{\prime },x_{2}^{\prime })^{\prime }$. It will be the solution to 
\begin{equation*}
\int 1(\phi (x_{t},u)\leq Q_{t}(\tau \mid {\boldsymbol{x}}))f(u|{\boldsymbol{%
x}})du=\tau .
\end{equation*}%
The pair $[Q_{1}(\tau | {\boldsymbol{x}}),Q_{2}(\tau | {\boldsymbol{x}})]$
is a quantile analog of Chamberlain's (1982) multivariate regression for
panel data. We can identify a quantile analog of the Hoderlein and White
(2012) average derivative effect. We first describe how these multivariate
panel quantiles can be used to identify an average derivative effect, then
give a precise interpretation of the effect. This description helps explain
the source of identification as well as the precise nature of the identified
effect.

To describe how identification works, differentiate both sides of the
previous identity with respect to $x_{s}$, treat the derivative of an
indicator function as a dirac delta, and assume the order of differentiation
and integration can be interchanged. This calculation gives 
\begin{align*}
0& =\int_{\phi (x_{t},u)=Q_{t}(\tau \mid {\boldsymbol{x}})}\big(\partial
_{x_{s}}Q_{t}(\tau |{\boldsymbol{x}})-1(s=t)\partial _{x}\phi (x_{t},u)\big)%
f(u|{\boldsymbol{x}})du \\
& +\int 1\big(\phi (x_{t},u)\leq Q_{t}(\tau \mid {\boldsymbol{x}})\big)%
\partial _{x_{s}}f(u|{\boldsymbol{x}})du.
\end{align*}%
Let $g_{t}(\tau \mid {\boldsymbol{x}})=\int_{\phi (x_{t},u)=Q_{t}(\tau \mid {%
\boldsymbol{x}})}f(u|{\boldsymbol{x}})du$ and note that 
\begin{equation*}
g_{t}(\tau \mid {\boldsymbol{x}})^{-1}\int_{\phi (x_{t},u)=Q_{t}(\tau \mid {%
\boldsymbol{x}})}\partial _{x}\phi (x_{t},u)f(u|{\boldsymbol{x}})du=E\big(%
\partial _{x}\phi (x_{t},U_{t})|\phi (x_{t},U_{t})=Q_{t}(\tau \mid {%
\boldsymbol{x}}),{\boldsymbol{X}}={\boldsymbol{x}}\big)
\end{equation*}%
Solving for $\partial _{x_{s}}Q_{t}(\tau |{\boldsymbol{x}})$ we find that, 
\begin{align*}
\partial _{x_{s}}Q_{t}(\tau |{\boldsymbol{x}})& =1(s=t)E\big(\partial
_{x}\phi (x_{t},U_{t})|\phi (x_{t},U_{t})=Q_{t}(\tau \mid {\boldsymbol{x}}),{%
\boldsymbol{X}}={\boldsymbol{x}}\big) \\
& -g_{t}(\tau \mid {\boldsymbol{x}})^{-1}\int 1(\phi (x_{t},u)\leq
Q_{t}(\tau \mid {\boldsymbol{x}}))\partial _{x_{s}}f(u|{\boldsymbol{x}})du.
\end{align*}%
Note that at $X_{1}=X_{2}=x$, $Q_{1}(\tau \mid x,x)=Q_{2}(\tau \mid
x,x)=q(\tau ,x)$ and $g_{1}(\tau \mid x,x)=g_{2}(\tau \mid x,x)$ by time
homogeneity. Then differencing the conditional quantile derivatives gives 
\begin{equation}  \label{eq:quantileeffect}
\begin{split}
\partial _{x_{2}}Q_{2}(\tau |x,x)-\partial _{x_{2}}Q_{1}(\tau |x,x)&
=\partial _{x_{1}}Q_{1}(\tau |x,x)-\partial _{x_{1}}Q_{2}(\tau |x,x) \\
& =E\big(\partial _{x}\phi (x,U_{t})|\phi (x,U_{t})=q(\tau ,x),X_{1}=X_{2}=x%
\big),
\end{split}%
\end{equation}%
where the last term does not depend on $t$ due to time homogeneity. The
equation (\ref{eq:quantileeffect}) is a panel version of the Hoderlein and
Mammen (2007) identification result. It is interesting to note that, unlike
in the mean case, differences of derivatives of quantiles generally differ
from derivatives of quantiles of differences. Below we will consider
identification from derivatives of quantiles of differences.

To make the above derivation precise we need to formulate conditions that
allow differentiation under the integral. The following regularity condition
is one approach to this, in particular for the dirac delta argument given
above.

\begin{ass}
\label{ass:regquantile} We can write $u=(h^{\prime },e )^{\prime }$ for
scalar $e$ , such that $\phi (x,u)=\phi (x,h ,e )$ is continuously
differentiable in $x$ and $e$ and there is $C>0$ with $\partial_e \phi (x,h
,e ) \geq 1/C$ and $\left\Vert \partial_x \phi (x,u)\right\Vert \leq C$
everywhere. For the corresponding representation of the random vector $%
U_t=(H_t^{\prime },E_t)$, $E_t$ is continuously distributed given $(H_t ,{%
\boldsymbol{X}}),$ with conditional pdf $f_E (e |h ,{\boldsymbol{x}})$ that
is bounded and continuous in $(e ,{\boldsymbol{x}}),$ and $f(h |{\boldsymbol{%
x}})$, the conditional pdf of $H$ given ${\boldsymbol{X}} = \boldsymbol{x}$,
is continuous in ${\boldsymbol{x}}$. Moreover, given ${\boldsymbol{x}}$
there is a $\delta >0$ such that 
\begin{equation}  \label{eq:domcond}
\int \sup_{\left\Vert \Delta _{{\boldsymbol{x}}}\right\Vert \leq \delta }f(h
|{\boldsymbol{x}}+\Delta _{{\boldsymbol{x}}})dh <\infty .
\end{equation}
\end{ass}

The boundedness conditions on the derivatives of $\phi (x,u)$ could further
be weakened at the expense of much more complicated notation and conditions.

For fixed $x$ let $f_{Y_{x}|{\boldsymbol{X}}}(y|{\boldsymbol{x}})$ denote
the conditional pdf of $Y_{x}=\phi (x,U_{t})$ given ${\boldsymbol{X}}={%
\boldsymbol{x}} = (x_1^{\prime }, x_2^{\prime })^{\prime }.$ The following
lemma shows differentiability of $\Pr (\phi (x,U_{t})\leq y|{\boldsymbol{X}}=%
{\boldsymbol{x}})$ with respect to $x$ and $y $ for given ${\boldsymbol{x}}$%
, and computes the derivatives.

\begin{lemma}
\label{lemm:quantilemain} If Assumption \ref{ass:regquantile} is satisfied
then for fixed ${\boldsymbol{x}}$, $\Pr (\phi (x,U_t)\leq y|{\boldsymbol{X}}
= {\boldsymbol{x}})$ is differentiable in $y$ and $x$ with derivatives
continuous in $y$, $x$ and ${\boldsymbol{x}}$ given by 
\begin{align*}
\partial_y \Pr (\phi (x,U_t)\leq y|{\boldsymbol{X}}={\boldsymbol{x}})
&=f_{Y_{x}|{\boldsymbol{X}}}(y|{\boldsymbol{x}}), \\
\partial_x \Pr (\phi (x,U_t)\leq y|{\boldsymbol{X}}={\boldsymbol{x}}) &= -
f_{Y_{x}|{\boldsymbol{X}}}(y|{\boldsymbol{x}})\, E\big(\partial_x \phi
(x,U_t) |Y_{x}=y,{\boldsymbol{X}}={\boldsymbol{x}}\big),
\end{align*}
where $Y_{x}=\phi (x,U_{t})$.
\end{lemma}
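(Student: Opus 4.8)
The plan is to disintegrate the conditional law of $U_t=(H_t',E_t)'$ given $\boldsymbol{X}=\boldsymbol{x}$, invert $\phi$ in its scalar last coordinate, differentiate under the integral sign by a Leibniz-rule argument using the uniform bounds on $\partial_x\phi$ and $\partial_e\phi$, and finally read off the two claimed formulas via a change-of-variables/Bayes calculation; the domination condition \eqref{eq:domcond} enters only at the very end to get joint continuity of the derivatives in $(y,x,\boldsymbol{x})$. First I would fix $(x,h)$ and observe that, by Assumption \ref{ass:regquantile}, $e\mapsto\phi(x,h,e)$ is continuously differentiable with derivative $\geq 1/C>0$, hence strictly increasing, so it has a $C^1$ inverse $\zeta(x,h,\cdot)$ on its range, determined by $\phi(x,h,\zeta(x,h,y))=y$ (where $y$ lies outside that range the ensuing analysis is vacuous because the factor $f_E(\zeta\mid h,\boldsymbol{x})$ below vanishes, since $f_E(\cdot\mid h,\boldsymbol{x})$ is continuous and a density). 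By the implicit function theorem $\zeta$ is $C^1$ in $(x,y)$ with
\begin{equation*}
\partial_y\zeta(x,h,y)=\frac{1}{\partial_e\phi(x,h,\zeta(x,h,y))},\qquad \partial_x\zeta(x,h,y)=-\frac{\partial_x\phi(x,h,\zeta(x,h,y))}{\partial_e\phi(x,h,\zeta(x,h,y))},
\end{equation*}
and $|\partial_y\zeta|\leq C$, $\|\partial_x\zeta\|\leq C^2$ uniformly in $(x,h,y)$ by the bounds in Assumption \ref{ass:regquantile}.

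Next, using the conditional density decomposition in Assumption \ref{ass:regquantile} and the equivalence $\phi(x,h,E_t)\leq y\iff E_t\leq\zeta(x,h,y)$,
\begin{equation*}
\Pr(\phi(x,U_t)\leq y\mid\boldsymbol{X}=\boldsymbol{x})=\int F_E\big(\zeta(x,h,y)\,\big|\,h,\boldsymbol{x}\big)\,f(h\mid\boldsymbol{x})\,dh,
\end{equation*}
with $F_E(\cdot\mid h,\boldsymbol{x})$ the conditional cdf of $E_t$ given $(H_t,\boldsymbol{X})=(h,\boldsymbol{x})$. For fixed $\boldsymbol{x}$ the integrand is $C^1$ in $(x,y)$ with $\partial_y[\cdot]=f_E(\zeta\mid h,\boldsymbol{x})\,\partial_y\zeta$ and $\partial_x[\cdot]=f_E(\zeta\mid h,\boldsymbol{x})\,\partial_x\zeta$, which are bounded by $\bar f\,C$ respectively $\bar f\,C^2$ with $\bar f=\sup f_E<\infty$. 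Since $f(h\mid\boldsymbol{x})$ does not depend on $(x,y)$ and is a density, $\bar f\,C^2 f(h\mid\boldsymbol{x})$ is integrable, so the Leibniz rule applies and gives
\begin{align*}
\partial_y\Pr(\phi(x,U_t)\leq y\mid\boldsymbol{X}=\boldsymbol{x})&=\int \frac{f_E(\zeta(x,h,y)\mid h,\boldsymbol{x})}{\partial_e\phi(x,h,\zeta(x,h,y))}\,f(h\mid\boldsymbol{x})\,dh,\\
\partial_x\Pr(\phi(x,U_t)\leq y\mid\boldsymbol{X}=\boldsymbol{x})&=-\int \partial_x\phi(x,h,\zeta(x,h,y))\,\frac{f_E(\zeta(x,h,y)\mid h,\boldsymbol{x})}{\partial_e\phi(x,h,\zeta(x,h,y))}\,f(h\mid\boldsymbol{x})\,dh.
\end{align*}

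Finally I would identify the right-hand sides. By the one-dimensional change-of-variables formula, conditional on $(H_t,\boldsymbol{X})=(h,\boldsymbol{x})$ the variable $Y_x=\phi(x,h,E_t)$ has density $y\mapsto f_E(\zeta(x,h,y)\mid h,\boldsymbol{x})/\partial_e\phi(x,h,\zeta(x,h,y))$; integrating against $f(h\mid\boldsymbol{x})$ identifies the first display with $f_{Y_x\mid\boldsymbol{X}}(y\mid\boldsymbol{x})$ (and re-confirms that this is indeed the derivative of the cdf in $y$). For the second display, Bayes' rule shows that the conditional density of $H_t$ given $(Y_x,\boldsymbol{X})=(y,\boldsymbol{x})$ is $f(h\mid\boldsymbol{x})f_E(\zeta\mid h,\boldsymbol{x})/[\partial_e\phi(x,h,\zeta)\,f_{Y_x\mid\boldsymbol{X}}(y\mid\boldsymbol{x})]$, while $Y_x=y$ forces $E_t=\zeta(x,H_t,y)$ and hence $\partial_x\phi(x,U_t)=\partial_x\phi(x,H_t,\zeta(x,H_t,y))$; multiplying through shows the integral in the second display equals $-f_{Y_x\mid\boldsymbol{X}}(y\mid\boldsymbol{x})\,E\big(\partial_x\phi(x,U_t)\mid Y_x=y,\boldsymbol{X}=\boldsymbol{x}\big)$, as claimed. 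For continuity, let $(y,x,\boldsymbol{x})\to(y_0,x_0,\boldsymbol{x}_0)$: the integrands converge pointwise in $h$ by continuity of $f_E$ in $(e,\boldsymbol{x})$, of $\zeta,\partial_y\zeta,\partial_x\zeta$ in $(x,y)$ (from $C^1$-ness of $\phi$), and of $f(h\mid\boldsymbol{x})$ in $\boldsymbol{x}$, and they are dominated by $\bar f\,C^2\sup_{\|\Delta_{\boldsymbol{x}}\|\leq\delta}f(h\mid\boldsymbol{x}_0+\Delta_{\boldsymbol{x}})$, which is integrable by \eqref{eq:domcond}; dominated convergence then gives joint continuity of both derivatives. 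I expect the main obstacle to be the bookkeeping in this last stage — matching the conditional-expectation expression to the integral representation and checking that \eqref{eq:domcond} furnishes exactly the uniform-in-$\boldsymbol{x}$ majorant needed — rather than any single analytic step.
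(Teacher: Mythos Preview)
Your proposal is correct and follows essentially the same route as the paper's proof: disintegrate against $f(h\mid\boldsymbol{x})$, invert $\phi$ in the scalar coordinate $e$ (your $\zeta$ is the paper's $\phi^{-1}$), differentiate under the integral using the uniform bounds on $\partial_e\phi$, $\partial_x\phi$, and $f_E$, and then identify the resulting integrals via the change-of-variables density $f_{Y_x\mid\boldsymbol{X},H_t}$ and a Bayes step. The only cosmetic differences are that you spell out the Leibniz-rule domination and the continuity-in-$\boldsymbol{x}$ argument using \eqref{eq:domcond} a bit more explicitly than the paper does.
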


With this result in hand we can now make precise the quantile effect
sketched above.

\begin{theorem}
\label{theorem:quant} If Assumptions \ref{assum:model} - \ref%
{ass:regquantile} are satisfied, $f(u|{\boldsymbol{x}})$ is continuously
differentiable in ${\boldsymbol{x}}$, 
\begin{equation}  \label{eq:dommain}
\int \sup_{\left\Vert \Delta _{{\boldsymbol{x}}}\right\Vert \leq
\delta}\left\Vert \partial_{\boldsymbol{x}} f(u|{\boldsymbol{x}}+\Delta _{{%
\boldsymbol{x}}})\right\Vert du<\infty ,
\end{equation}
and the conditional density of $Y_t$ given ${\boldsymbol{X}}$ is positive on
the interior of its support then for all $0<\tau <1,$ $Q_{t}(\tau | {%
\boldsymbol{x}})$ exists and is continuously differentiable such that (\ref%
{eq:quantileeffect}) holds true.
\end{theorem}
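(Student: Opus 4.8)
The plan is to deduce the theorem from Lemma~\ref{lemm:quantilemain}, one additional differentiation-under-the-integral step, and the implicit function theorem. Fix $t\in\{1,2\}$ and set
\[
H_t(y,\boldsymbol{x})=\Pr(Y_t\le y\mid\boldsymbol{X}=\boldsymbol{x})=\int 1\big(\phi(x_t,u)\le y\big)f(u\mid\boldsymbol{x})\,du,
\]
together with the auxiliary function $G_t(y,x,\boldsymbol{x})=\int 1(\phi(x,u)\le y)f(u\mid\boldsymbol{x})\,du=\Pr(\phi(x,U_t)\le y\mid\boldsymbol{X}=\boldsymbol{x})$, so that $H_t(y,\boldsymbol{x})=G_t(y,x_t,\boldsymbol{x})$. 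Lemma~\ref{lemm:quantilemain} already supplies $\partial_yG_t=f_{Y_x|\boldsymbol{X}}(y\mid\boldsymbol{x})$ and $\partial_xG_t=-f_{Y_x|\boldsymbol{X}}(y\mid\boldsymbol{x})\,E\big(\partial_x\phi(x,U_t)\mid Y_x=y,\boldsymbol{X}=\boldsymbol{x}\big)$ (here $x$ is the ``outer'' argument, $\boldsymbol{x}$ the conditioning one), both derivatives being jointly continuous in $(y,x,\boldsymbol{x})$.

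First I would upgrade $G_t$ to a genuinely $C^1$ function by producing the partials with respect to the conditioning argument. Since $f(u\mid\boldsymbol{x})$ is continuously differentiable in $\boldsymbol{x}$ and, by \eqref{eq:dommain}, $\sup_{\|\Delta_{\boldsymbol{x}}\|\le\delta}\|\partial_{\boldsymbol{x}}f(u\mid\boldsymbol{x}+\Delta_{\boldsymbol{x}})\|$ is integrable, Leibniz's rule yields $\partial_{\boldsymbol{x}}G_t(y,x,\boldsymbol{x})=\int 1(\phi(x,u)\le y)\,\partial_{\boldsymbol{x}}f(u\mid\boldsymbol{x})\,du$. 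The delicate point is joint continuity of this expression: continuity in $\boldsymbol{x}$ is dominated convergence against the same majorant, but continuity in $(y,x)$ must absorb the jump of the indicator. Here Assumption~\ref{ass:regquantile} is what makes it work --- writing $u=(h,e)$ with $\partial_e\phi\ge 1/C$ and $\|\partial_x\phi\|\le C$, for $(x',y')$ near $(x,y)$ the set on which $1(\phi(x',u)\le y')$ and $1(\phi(x,u)\le y)$ disagree is contained in a thin slab $\{\,|\phi(x,u)-y|\le\varepsilon\,\}$ with $\varepsilon\downarrow 0$, and since for Lebesgue-a.e.\ $h$ the map $e\mapsto\phi(x,h,e)$ is strictly increasing such a slab has vanishing Lebesgue measure, so dominated convergence against the integrable majorant forces the integral over the disagreement set to $0$. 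Hence $G_t$ has continuous partials in all of $(y,x,\boldsymbol{x})$, so $G_t\in C^1$, and therefore $H_t(y,\boldsymbol{x})=G_t(y,x_t,\boldsymbol{x})\in C^1$ with, by the chain rule, $\partial_yH_t=\partial_yG_t|_{x=x_t}$, $\partial_{x_t}H_t=(\partial_xG_t+\partial_{x_t}G_t)|_{x=x_t}$, and $\partial_{x_s}H_t=\partial_{x_s}G_t|_{x=x_t}$ for $s\ne t$, where $\partial_{x_s}G_t$ on the right is the partial in the $x_s$-block of the conditioning argument.

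Next, for fixed $\boldsymbol{x}$ the map $y\mapsto H_t(y,\boldsymbol{x})$ is a continuous distribution function --- the law of $Y_t\mid\boldsymbol{X}=\boldsymbol{x}$ is that of $\phi(x_t,U_t)\mid\boldsymbol{X}=\boldsymbol{x}$ --- which is strictly increasing on the interior of its support by the positivity hypothesis; so for each $\tau\in(0,1)$ there is a unique $Q_t(\tau\mid\boldsymbol{x})$ with $H_t(Q_t(\tau\mid\boldsymbol{x}),\boldsymbol{x})=\tau$, it lies in that interior, and there $\partial_yH_t=f_{Y_t|\boldsymbol{X}}(Q_t(\tau\mid\boldsymbol{x})\mid\boldsymbol{x})>0$. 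Since $H_t\in C^1$, the implicit function theorem gives $Q_t(\tau\mid\cdot)\in C^1$ with $\partial_{x_s}Q_t(\tau\mid\boldsymbol{x})=-\partial_{x_s}H_t(Q_t(\tau\mid\boldsymbol{x}),\boldsymbol{x})/f_{Y_t|\boldsymbol{X}}(Q_t(\tau\mid\boldsymbol{x})\mid\boldsymbol{x})$; inserting the Lemma's formula for $\partial_xG_t$ gives, for $s=t$,
\[
\partial_{x_t}Q_t(\tau\mid\boldsymbol{x})=E\big(\partial_x\phi(x_t,U_t)\mid Y_t=Q_t,\boldsymbol{X}=\boldsymbol{x}\big)-\frac{\int 1(\phi(x_t,u)\le Q_t)\,\partial_{x_t}f(u\mid\boldsymbol{x})\,du}{f_{Y_t|\boldsymbol{X}}(Q_t\mid\boldsymbol{x})},
\]
while for $s\ne t$ the conditional-expectation term is absent and only the final ``heterogeneity'' integral survives, with $\partial_{x_t}f$ replaced by $\partial_{x_s}f$.

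Finally I would evaluate at $\boldsymbol{x}=(x,x)$. By Assumption~\ref{ass:distrinv}, $U_1\mid\boldsymbol{X}\overset{d}{=}U_2\mid\boldsymbol{X}$; together with $\phi(x_1,\cdot)=\phi(x_2,\cdot)=\phi(x,\cdot)$ at $\boldsymbol{x}=(x,x)$ this forces $Q_1(\tau\mid x,x)=Q_2(\tau\mid x,x)=q(\tau,x)$ and $f_{Y_1|\boldsymbol{X}}(\cdot\mid x,x)=f_{Y_2|\boldsymbol{X}}(\cdot\mid x,x)$. Hence the heterogeneity integral sitting inside $\partial_{x_2}Q_2(\tau\mid x,x)$ coincides with the entire expression for $\partial_{x_2}Q_1(\tau\mid x,x)$ (there $s\ne t$, so no conditional expectation appears), and subtracting cancels it, leaving
\[
\partial_{x_2}Q_2(\tau\mid x,x)-\partial_{x_2}Q_1(\tau\mid x,x)=E\big(\partial_x\phi(x,U_t)\mid\phi(x,U_t)=q(\tau,x),X_1=X_2=x\big),
\]
the right-hand side being independent of $t$ by time homogeneity; the identical computation with the periods $1$ and $2$ interchanged gives $\partial_{x_1}Q_1(\tau\mid x,x)-\partial_{x_1}Q_2(\tau\mid x,x)$ equal to the same quantity, and together these are exactly \eqref{eq:quantileeffect}. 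The step I expect to be the crux is the second paragraph: turning the heuristic ``dirac delta'' calculation of Section~\ref{sec:quant} into an honest $C^1$ statement for $H_t$. Lemma~\ref{lemm:quantilemain} already disposes of the indicator (delta) contribution coming from the outer argument, so what remains is the Leibniz-rule argument for the $\partial_{\boldsymbol{x}}f$ term and the joint continuity of the resulting integral, and it is there that the monotonicity built into Assumption~\ref{ass:regquantile} is indispensable; granted that, the implicit function theorem and the time-homogeneity cancellation are routine.
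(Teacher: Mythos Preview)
Your proposal is correct and follows essentially the same route as the paper: define the auxiliary function $G_t(y,x,\boldsymbol{x})=\Pr(\phi(x,U_t)\le y\mid\boldsymbol{X}=\boldsymbol{x})$ (the paper's $H(\boldsymbol{z})$), invoke Lemma~\ref{lemm:quantilemain} for the $(y,x)$-derivatives, use \eqref{eq:dommain} to differentiate under the integral in $\boldsymbol{x}$, then apply the implicit function theorem and cancel the heterogeneity term at $x_1=x_2=x$. You are actually more explicit than the paper on one point it passes over---the joint continuity in $(y,x)$ of $\int 1(\phi(x,u)\le y)\,\partial_{\boldsymbol{x}}f(u\mid\boldsymbol{x})\,du$ despite the indicator---and your thin-slab argument via Assumption~\ref{ass:regquantile} is the right way to justify that.
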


To illustrate the previous result, consider the familiar linear model with
additive heterogeneity $Y_t = X_{t}^{\prime }\theta + U_t,$ where $U_t = A +
V_t.$ Let $\overline{Q}_{\tau}(\cdot \mid {\boldsymbol{X}})$ denote the
linear $\tau$-quantile regression on $\text{vec}({\boldsymbol{X}}),$ a
quantile version of the panel multivariate regression of Chamberlain (1982).
Under time homogeneity 
\begin{equation*}
\overline{Q}_{\tau}(Y_t \mid {\boldsymbol{x}}) = x_t^{\prime }\theta + 
\overline{Q}_{\tau}(U_t \mid {\boldsymbol{x}}) = x_t^{\prime }\theta +
x_1^{\prime }\gamma_{\tau1} + x_2^{\prime }\gamma_{\tau2},
\end{equation*}
where $\gamma_{\tau1}$ and $\gamma_{\tau2}$ do not depend on $t$. Taking
derivatives and differencing over time, for $s \neq t,$ 
\begin{equation*}
\partial_{x_t} \overline{Q}_{\tau}(Y_t \mid {\boldsymbol{x}}) -
\partial_{x_t} \overline{Q}_{\tau}(Y_s \mid {\boldsymbol{x}}) = \theta +
\gamma_{\tau t} - \gamma_{\tau t} = \theta.
\end{equation*}
Here the result holds for sequences ${\boldsymbol{x}}$ with $x_1 \neq x_2$
because the heterogeneity is additive.

\section{Quantiles of Transformations of the Dependent Variables}

In this section we answer the question whether we can relate quantiles of
the first difference of the dependent variable to causal effects. In fact,
the same arguments and assumptions that are used for first differences can
also be employed for arbitrary functions of the dependent variables which
map the $T$-vector of dependent variables ${\boldsymbol{Y}}$ (in our case
for simplicity $T=2$) into a scalar \textquotedblleft index". However, as it
turns out, if we restrict ourselves to using only two time periods of the
covariates $X_{t} $, we have to strengthen the assumptions significantly to
make statements about causal effects. This is related to the fact that we do
not have an auxiliary equation at our disposal that allows us to correct for
the heterogeneity bias that arose from the correlation of $X_{t}$ and $%
U_{s}. $

To be more specific about the assumptions: While still considering the model
specified in Assumption \ref{assum:model}, instead of time homogeneity
assumption \ref{ass:distrinv}, in this section we shall use independence
assumptions.

\begin{ass}
\label{assum:condindependent}

\begin{enumerate}
\item $(V_{1},V_{2})\text{ are independent of }(X_{1},X_{2})|A,$

\item $A$ is independent of $X_{2}|X_{1},$
\end{enumerate}
\end{ass}

The first part of this assumption states that the transitory error component
is independent of covariates, given the persistent fixed effect, which is a
notion of strict exogeneity. The second part of this assumption is more
restrictive as it rules out the case where $A$ is arbitrarily correlated
with the $X_{t}$ process. This is a special case of the sufficient statistic
type assumptions in Altonji and Matzkin (2005). Assumption \ref{ass:distrinv}
does not restrict the relationship between $X_t$ and $A$ and allows for $X_t$
and $V_t$ to be correlated, but it is not formally nested within Assumption %
\ref{assum:condindependent}. To see this, consider the example of the panel
multivariate quantile regression in the additive linear model of Section \ref%
{sec:quant}. Without time homogeneity, 
\begin{equation*}
\overline{Q}_{\tau}(Y_t \mid {\boldsymbol{x}}) = x_t^{\prime }\theta +
x_1^{\prime }\gamma_{\tau1,t} + x_2^{\prime }\gamma_{\tau2,t}, \ \ t = 1,2.
\end{equation*}
Assumption \ref{ass:distrinv} imposes time homogeneity on the coefficients,
i.e., $\gamma_{\tau1,t} = \gamma_{\tau1},$ and $\gamma_{\tau2,t} =
\gamma_{\tau2},$ whereas Assumption \ref{assum:condindependent} imposes the
exclusion restrictions $\gamma_{\tau2,1} = 0$ and $\gamma_{\tau2,2} = 0$ but
lets $\gamma_{\tau1,t}$ vary with $t$. In our view these exclusion
restrictions are stronger than time homogeneity in most economic
applications.

\medskip

To adopt a similar framework as above, we rewrite 
\begin{equation*}
{\boldsymbol{U}}=\left( V_{1},V_{2},A\right) ^{T},
\end{equation*}%
and note that the independence and strict exogeneity assumptions imply that:

\begin{lemma}
\label{lem:independ} Under Assumption \ref{assum:condindependent}, ${%
\boldsymbol{U}}$ and $X_2$ are independent given $X_1$.
\end{lemma}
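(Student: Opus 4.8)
The plan is to show that the conditional law of ${\boldsymbol{U}}=(V_1,V_2,A)$ given $(X_1,X_2)$ does not depend on $X_2$, equivalently that its conditional density given $(X_1,X_2)$ agrees with its conditional density given $X_1$ alone. Working with conditional densities, as in the rest of the paper, I would start from the chain-rule factorization
\[
f_{{\boldsymbol{U}}\mid X_1,X_2}(v_1,v_2,a\mid x_1,x_2)=f_{V_1,V_2\mid A,X_1,X_2}(v_1,v_2\mid a,x_1,x_2)\,f_{A\mid X_1,X_2}(a\mid x_1,x_2).
\]
The first part of Assumption \ref{assum:condindependent} states $(V_1,V_2)\perp(X_1,X_2)\mid A$, so the first factor reduces to $f_{V_1,V_2\mid A}(v_1,v_2\mid a)$, which involves neither $x_1$ nor $x_2$. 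The second part states $A\perp X_2\mid X_1$, so the second factor reduces to $f_{A\mid X_1}(a\mid x_1)$, which does not involve $x_2$. Hence $f_{{\boldsymbol{U}}\mid X_1,X_2}(v_1,v_2,a\mid x_1,x_2)=f_{V_1,V_2\mid A}(v_1,v_2\mid a)\,f_{A\mid X_1}(a\mid x_1)$ is free of $x_2$.

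Since the conditional density of ${\boldsymbol{U}}$ given $(X_1,X_2)$ does not depend on $x_2$, integrating $x_2$ out against $f_{X_2\mid X_1}$ shows it coincides with $f_{{\boldsymbol{U}}\mid X_1}$; as a cross-check, the same chain-rule argument applied directly — using the decomposition consequence $(V_1,V_2)\perp X_1\mid A$ of the first part of Assumption \ref{assum:condindependent} — gives $f_{{\boldsymbol{U}}\mid X_1}(v_1,v_2,a\mid x_1)=f_{V_1,V_2\mid A}(v_1,v_2\mid a)\,f_{A\mid X_1}(a\mid x_1)$, matching the expression above. This yields $f_{{\boldsymbol{U}}\mid X_1,X_2}=f_{{\boldsymbol{U}}\mid X_1}$, i.e.\ ${\boldsymbol{U}}$ and $X_2$ are independent given $X_1$. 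If one prefers to avoid the existence of densities, the same conclusion follows purely from the semi-graphoid axioms: weak union applied to the first part gives $X_2\perp(V_1,V_2)\mid(A,X_1)$, the second part gives $X_2\perp A\mid X_1$, and contraction then delivers $X_2\perp(A,V_1,V_2)\mid X_1$.

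There is no substantive obstacle here; the only thing requiring care is the bookkeeping of which variables remain in each conditioning set at each step, and — if full generality without densities is desired — keeping to the semi-graphoid route so that no regularity is implicitly assumed.
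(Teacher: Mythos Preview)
Your proof is correct and follows essentially the same approach as the paper: factor the conditional law of ${\boldsymbol{U}}$ given $(X_1,X_2)$ through $A$, use part~1 of Assumption~\ref{assum:condindependent} to strip $(X_1,X_2)$ from the $(V_1,V_2)$ factor and part~2 to strip $X_2$ from the $A$ factor. The paper carries this out with probabilities of measurable sets rather than densities, but the logic is identical; your added semi-graphoid remark is a nice bonus.
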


\begin{proof}
For measurable sets $K_i$, $i = 1,2,3,$
\begin{align*}
& P\big(V_1 \in K_1, V_2 \in K_2, A \in K_3 | X_1 = x_1, X_2 = x_2 \big) \\
= & \int_{K_3} P\big(V_1 \in K_1, V_2 \in K_2| A = a, X_1 = x_1, X_2 = x_2 \big)\, P_{A | X_1, X_2}(da|x_1, x_2) \\
= & \int_{K_3} P\big(V_1 \in K_1, V_2 \in K_2| A = a\big)\, P_{A | X_1}(da|x_1).
\end{align*}
Thus, the conditional distribution of $\bU$ given $X_1, X_2$ does not depend on $X_2$, proving conditional independence.
\end{proof}

As already mentioned above, we consider now quantiles of differences and
other transformations of the dependent variables. To this end, let $\psi
(y_{1},y_{2})$ be an arbitrary (differentiable) function and note that 
\begin{equation}  \label{eq:genfuct}
\tilde{Y}=\psi (Y_{1},Y_{2})=\psi \left( \phi (X_{1},V_{1},A),\phi
(X_{2},V_{2},A)\right) =:g(X_{1},X_{2},{\boldsymbol{U}}),
\end{equation}%
so that for ${\boldsymbol{u}}=(v_{1},v_{2},a)$, we have that $g(x_{1},x_{2},{%
\boldsymbol{u}})=\psi (\phi (x_{1},v_{1},a),\phi (x_{2},v_{2},a))$. Denote
by $\tilde{q}(\tau ,x_{1},x_{2})$ the conditional quantile of $\tilde{Y}$
given $X=x_{1},X_{2}=x_{2}$, so that 
\begin{equation*}
P\left[ \tilde{Y}\leq \tilde{q}(\tau ,x_{1},x_{2})|X_{1}=x_{1},X_{2}=x_{2}%
\right] =\tau .
\end{equation*}%
For convenience, we first formulate and prove a result along the lines of
Hoderlein and Mammen (2007) for a general model of the form 
\begin{equation}
Y=g(X_{1},X_{2},{\boldsymbol{U}}),  \label{eq:simplifiedmodel}
\end{equation}%
in terms of regularity assumptions similar to Assumption \ref%
{ass:regquantile}, and then specialize it to (\ref{eq:genfuct}).

\begin{ass}
\label{ass:specialass} Suppose that in the model (\ref{eq:simplifiedmodel}),
we can write ${\boldsymbol{u}}=(h^{\prime },e)^{\prime }$ for scalar $e$ ,
such that $g(x_{1},x_{2},{\boldsymbol{u}})=g(x_{1},x_{2},h,e)$ is
continuously differentiable in $x_{2}$ and $e$. Moreover, for fixed $x_{1}$
there is a $C>0$ (possibly depending on $x_{1}$) with $\partial
_{e}g(x_{1},x_{2},h,e)\geq 1/C$ and $\left\Vert \partial
_{x_{2}}g(x_{1},x_{2},{\boldsymbol{u}})\right\Vert \leq C$ for all $x_{2}$
and ${\boldsymbol{u}}$. For the corresponding representation of the random
vector ${\boldsymbol{U}}=(H,E)$, $E$ is absolutely continuously distributed
given $(H,X_{1}),$ with conditional pdf $f_{E}(e|h,x_{1})$ that is bounded
and continuous in $e,$ and the conditional distribution of $H$ given $X_{1}$
is absolutely continuous with pdf $f(h|x_{1})$.
\end{ass}

These assumptions are by and large regularity conditions, akin to those
employed in Hoderlein and Mammen (2007), e.g., differentiability conditions.
They do not restrict the model significantly, and we therefore do not
discuss them at length. Together with the independence condition, they allow
us to establish an extension to the Hoderlein and Mammen (2007) result:

\begin{proposition}
\label{prop: dquant} Suppose that in the model (\ref{eq:simplifiedmodel}), $%
X_{2}$ is conditionally independent of ${\boldsymbol{U}}$ given $X_{1}$,
that Assumption \ref{ass:specialass} is satisfied and that the conditional
pdf of $Y$ given $X_{1}$ and $X_{2}$ is positive in the interior of its
support. Then for every $0<\tau <1$, the conditional quantile $q(\tau
,x_{1},x_{2})$ of $Y$ given $X_{1}=x_{1}$, $X_{2}=x_{2}$ exists and is
continuously differentiable with 
\begin{equation}  \label{eq:indentindepend}
\partial _{x_{2}}q(\tau ,x_{1},x_{2})=E\left[ \partial _{x_{2}}g(X_{1},X_{2},%
{\boldsymbol{U}})|Y=q(\tau ,x_{1},x_{2}),X_{1}=x_{1},X_{2}=x_{2}\right] .
\end{equation}
\end{proposition}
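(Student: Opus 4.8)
The plan is to run the argument behind Lemma \ref{lemm:quantilemain}, with the assumed conditional independence of $X_2$ and ${\boldsymbol{U}}$ given $X_1$ playing the role previously played by time homogeneity. Write ${\boldsymbol{u}}=(h',e)'$ as in Assumption \ref{ass:specialass}. Since $\partial_e g(x_1,x_2,h,e)\ge 1/C>0$, the map $e\mapsto g(x_1,x_2,h,e)$ is continuous and strictly increasing, so on the range of this map it has a continuously differentiable inverse $e^{*}=e^{*}(y,x_1,x_2,h)$, with (by the implicit function theorem) $\partial_y e^{*}=1/\partial_e g$ and $\partial_{x_2}e^{*}=-\partial_{x_2}g/\partial_e g$ evaluated at $(x_1,x_2,h,e^{*})$; for $y$ outside the range one sets $e^{*}=\pm\infty$ with $F_E(\pm\infty\mid h,x_1)\in\{0,1\}$. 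Conditional independence then gives
\begin{equation*}
F(y\mid x_1,x_2):=\Pr(Y\le y\mid X_1=x_1,X_2=x_2)=\int F_E\big(e^{*}(y,x_1,x_2,h)\mid h,x_1\big)\,f(h\mid x_1)\,dh .
\end{equation*}

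Because $|\partial_y e^{*}|\le C$, $\|\partial_{x_2}e^{*}\|\le C^{2}$ and $f_E$ is bounded, the $y$- and $x_2$-partial derivatives of the integrand are bounded, uniformly in a neighbourhood of the point of interest, by a constant multiple of the integrable function $f(h\mid x_1)$; note that, unlike in Assumption \ref{ass:regquantile}, no domination in the conditioning value is required here because $X_1=x_1$ is held fixed throughout. Differentiating under the integral therefore yields
\begin{align*}
\partial_y F(y\mid x_1,x_2)&=\int \frac{f_E(e^{*}\mid h,x_1)}{\partial_e g(x_1,x_2,h,e^{*})}\,f(h\mid x_1)\,dh=:f_{Y}(y\mid x_1,x_2),\\
\partial_{x_2}F(y\mid x_1,x_2)&=-\int \frac{\partial_{x_2}g(x_1,x_2,h,e^{*})}{\partial_e g(x_1,x_2,h,e^{*})}\,f_E(e^{*}\mid h,x_1)\,f(h\mid x_1)\,dh ,
\end{align*}
and dominated convergence, together with the continuity of $g,\partial_e g,\partial_{x_2}g,f_E$ and hence of $e^{*}$, shows that both derivatives are continuous in $(y,x_1,x_2)$.

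Next I would identify $-\partial_{x_2}F$ with $f_Y\cdot E[\partial_{x_2}g\mid Y=y,X_1=x_1,X_2=x_2]$. Changing variables $(h,e)\mapsto(h,y)$ with $y=g(x_1,x_2,h,e)$ (Jacobian $\partial_e g$), the conditional density of $(H,Y)$ given $X_1=x_1,X_2=x_2$ at $(h,y)$ equals $f_E(e^{*}\mid h,x_1)\,f(h\mid x_1)/\partial_e g(x_1,x_2,h,e^{*})$, so $f_Y$ above is the conditional density of $Y$ and the conditional density of $H$ given $Y=y,X_1=x_1,X_2=x_2$ is that expression divided by $f_Y(y\mid x_1,x_2)$. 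Integrating $\partial_{x_2}g(x_1,x_2,h,e^{*})$ against the latter gives exactly $-\partial_{x_2}F(y\mid x_1,x_2)/f_Y(y\mid x_1,x_2)$. Finally, since $f_Y(\cdot\mid x_1,x_2)>0$ on the interior of the support, $F(\cdot\mid x_1,x_2)$ is continuous and strictly increasing there, so $q(\tau,x_1,x_2)=F^{-1}(\tau\mid x_1,x_2)$ exists for every $\tau\in(0,1)$; applying the implicit function theorem to $F(q(\tau,x_1,x_2)\mid x_1,x_2)=\tau$, with $\partial_yF=f_Y>0$ and $\partial_yF,\partial_{x_2}F$ continuous, gives that $q$ is continuously differentiable and $\partial_{x_2}q=-\partial_{x_2}F/\partial_yF$, which is (\ref{eq:indentindepend}).

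The main obstacle is the rigorous justification of differentiating under the integral and, more delicately, of the joint continuity of $\partial_yF$ and $\partial_{x_2}F$ in $(y,x_1,x_2)$, which is what licenses the implicit function theorem and the conclusion $q\in C^{1}$; this is exactly where the two-sided control $\partial_e g\ge 1/C$, $\|\partial_{x_2}g\|\le C$ and the boundedness and continuity of $f_E$ are used, both to build the dominating functions and to pass to the limit as $(y,x_1,x_2)$ vary. A secondary technical point is the behaviour at values $y$ on the boundary of the support of $Y$ given $(X_1,X_2)$, where $e^{*}$ is infinite for a positive-measure set of $h$; there one notes that $F_E(\cdot\mid h,x_1)$ is locally constant off the support of $E\mid H=h$, so those $h$ contribute nothing to the derivatives.
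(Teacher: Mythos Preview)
Your proposal is correct and follows essentially the same route as the paper: represent $F(y\mid x_1,x_2)$ via the inverse $e^{*}=g^{-1}$ as $\int F_E(e^{*}\mid h,x_1)f(h\mid x_1)\,dh$, differentiate under the integral using the boundedness of $f_E$ and the two-sided bounds on $\partial_e g,\partial_{x_2}g$, identify the resulting expressions with $f_Y$ and $f_Y\cdot E[\partial_{x_2}g\mid Y,X_1,X_2]$ by the change of variables $(h,e)\mapsto(h,y)$, and then apply the implicit function theorem. One small overclaim: you assert joint continuity of the derivatives in $(y,x_1,x_2)$, but Assumption~\ref{ass:specialass} gives no regularity in $x_1$ (the paper simply fixes $x_1$ throughout), so the $C^1$ conclusion should be read as $C^1$ in $x_2$ for each fixed $x_1$; this does not affect the argument or the stated result.
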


We now specialize this general result to the setup of this paper, and
discuss it below in this specialized setup. To this end, we modify the
regularity conditions accordingly:

\begin{ass}
\label{ass:quantileass22} Suppose that in the model (\ref{eq:simplifiedmodel}%
), $\psi (y_{1},y_{2})$ is continuously partially differentiable in $y_{2}$
with $1/K\leq \partial _{y_{2}}\psi (y_{1},y_{2})\leq K$ for all $%
y_{1},y_{2} $ for some $K>0$. Further, assume that we can write $v=(\tilde{h}%
^{\prime },e)^{\prime }$ for scalar $e$ , such that $\phi (x,v,a)=\phi (x,%
\tilde{h},e,a)$ is continuously differentiable in $x$ and $e$ and such that
there is a $C>0$ with $\partial _{e}\phi (x,h,e,a)\geq 1/C$ and $|\partial
_{x}\phi (x,v,a)|\leq C$ for all $x,v,a$. For the corresponding
representation of the random vector $V_{2}=(H,E)$, $E$ is absolutely
continuously distributed given $(X_{1},H,A,V_{1}),$ with conditional pdf
that is bounded and continuous in $e,$ and the conditional distribution of $%
(H,A,V_{1})$ given $X_{1}$ is absolutely continuous.
\end{ass}

These preliminaries lead to the expected corollary:

\begin{corollary}
Suppose that in (\ref{eq:genfuct}), Assumptions \ref{assum:model}, \ref%
{assum:condindependent} and \ref{ass:quantileass22} are satisfied, and that
the conditional density of $\tilde{Y}$ given $X_{1}=x_{1},X_{2}=x_{2}$ is
positive in the interior of its support. Then (\ref{eq:indentindepend})
holds true.
\end{corollary}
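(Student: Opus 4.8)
The plan is to deduce the corollary directly from Proposition \ref{prop: dquant}, applied to the model $\tilde Y = g(X_1,X_2,\bU)$ of (\ref{eq:genfuct}) with $\bU=(V_1,V_2,A)$. For this I need to check the three hypotheses of that proposition for $Y=\tilde Y$: (i) $X_2$ is conditionally independent of $\bU$ given $X_1$; (ii) Assumption \ref{ass:specialass} holds for $g$; (iii) the conditional pdf of $\tilde Y$ given $X_1,X_2$ is positive on the interior of its support. Hypothesis (i) is exactly the conclusion of Lemma \ref{lem:independ}, which holds under Assumption \ref{assum:condindependent}; hypothesis (iii) is assumed in the statement of the corollary. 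So the only real work is (ii): translating Assumption \ref{ass:quantileass22} into Assumption \ref{ass:specialass} for the composite map.

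For (ii) I would use the decomposition furnished by Assumption \ref{ass:quantileass22}: write $V_2=(H,E)$ with $E$ scalar and $\phi(x,v,a)=\phi(x,\tilde h,e,a)$ continuously differentiable in $x$ and $e$. Take this same scalar $E$ as the scalar coordinate of $\bU$ and collect the remaining pieces into $H_{\bU}:=(H,A,V_1)$, so that $g(x_1,x_2,\bu)=\psi\big(\phi(x_1,v_1,a),\phi(x_2,h,e,a)\big)$. Since $\phi(x_1,v_1,a)$ involves neither $x_2$ nor $e$, the chain rule gives $\partial_{x_2}g=\partial_{y_2}\psi\cdot\partial_{x_2}\phi$ and $\partial_e g=\partial_{y_2}\psi\cdot\partial_e\phi$, both continuous. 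The bounds $1/K\le\partial_{y_2}\psi\le K$, $\partial_e\phi\ge 1/C$ and $|\partial_x\phi|\le C$ from Assumption \ref{ass:quantileass22} then yield $\partial_e g\ge 1/(CK)$ and $\|\partial_{x_2}g\|\le CK$, so the differentiability and two-sided bound requirements of Assumption \ref{ass:specialass} hold with a constant $CK$ that may depend on $x_1$.

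It remains to match the distributional part. In the notation just set up, the requirement that $E$ be absolutely continuous given $(H_{\bU},X_1)$ with a bounded, $e$-continuous conditional pdf is precisely the statement in Assumption \ref{ass:quantileass22} that $E$ is absolutely continuous given $(X_1,H,A,V_1)$ with conditional pdf bounded and continuous in $e$; and the requirement that the conditional law of $H_{\bU}$ given $X_1$ be absolutely continuous is the assumed absolute continuity of the conditional law of $(H,A,V_1)$ given $X_1$. Hence Assumption \ref{ass:specialass} holds for $g$, all hypotheses of Proposition \ref{prop: dquant} are in force, and its conclusion is exactly (\ref{eq:indentindepend}) with $q(\tau,x_1,x_2)=\tilde q(\tau,x_1,x_2)$, which proves the corollary.

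The argument is essentially bookkeeping, and I do not expect a substantive obstacle; the one point that needs a little care is verifying that the scalar ``error'' $E$ chosen within $V_2$ for the representation of $\phi$ can simultaneously serve as the scalar coordinate of the full vector $\bU=(V_1,V_2,A)$, and that the conditioning sets of the two assumptions line up after this relabeling — which they do because $V_1$ and $A$ are simply carried along inside $H_{\bU}$. A small related check, worth stating explicitly, is that $x_2$ and $e$ enter $\tilde Y$ only through the second argument of $\psi$, so $\partial_{y_1}\psi$ never appears and only the $y_2$-differentiability and the $y_2$-derivative bounds on $\psi$ are needed.
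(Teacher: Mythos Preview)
Your proposal is correct and is exactly the argument the paper intends: the corollary is stated without proof as the ``expected corollary'' of Proposition \ref{prop: dquant}, and your write-up carries out precisely the verification that Assumption \ref{ass:quantileass22} for $\phi$ and $\psi$ implies Assumption \ref{ass:specialass} for the composite $g$, with Lemma \ref{lem:independ} supplying the conditional independence. The only cosmetic point is that the constants $C$ and $K$ in Assumption \ref{ass:quantileass22} are in fact global, so your bound $CK$ does not actually depend on $x_1$ (though Assumption \ref{ass:specialass} would permit it to).
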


This result is very similar in spirit to the results in the previous
section, again an LAR for a subpopulation (or a derivative for an ASF) is
identified. The advantage, however, is now that we can look at
subpopulations that are characterized by arbitrary combinations of $Y_{1}$
and $Y_{2}$. If we confine ourselves to linear combinations, i.e., $\tilde{Y}%
=\lambda Y_{1}+\pi Y_{2},$ we can consider conditioning on arbitrary weights 
$\lambda ,\pi $. Since we can vary $\lambda ,\pi $ freely, this means that
we can use the entire joint distribution in the sense of the Cramer-Wold
device, by looking at any linear combination, and hence use multivariate
information through repeated use of one regular regression quantiles. It
allows to construct subpopulations where we put different weights on the
outcome in different periods. For instance, if $X$ is schooling, and $Y_{t}$
is labor income in different periods, we may think of $\tilde{Y}$ as some
long run or average income. And when computing this long run income, we
could either discount future income stronger or emphasize it more when
characterizing the subpopulations, depending on the intention of the
researcher. Of course, one should always remember that the strength in
statements we can make always comes at the expense of the structure we
impose on the dependence between $A$ and $X_{t}.$

\bigskip

This result covers important special cases:

\begin{enumerate}
\item The difference: $\psi (y_{1},y_{2})=y_{2}-y_{1}=\Delta y$. Then $%
\tilde{q}(\tau ,x_{1},x_{2})$ is the conditional quantile of the difference,
and 
\begin{equation*}
\partial _{x_{2}}q^{\Delta Y}(\tau ,x_{1},x_{2})=E\left[ \partial _{x}\phi
(x_{2},V_{2},A)|X_{1}=x_{1},X_{2}=x_{2},\Delta Y=q^{\Delta Y}(\tau
,x_{1},x_{2})\right] .
\end{equation*}

\item $Y_{2}$: Here $\psi (y_{1},y_{2})=y_{2}$, so that 
\begin{equation*}
\partial _{x_{2}}q^{Y_{2}}(\tau ,x_{1},x_{2})=E\big(\partial _{x}\phi
(x_{2},V_{2},A)|X_{1}=x_{1},X_{2}=x_{2},Y_{2}=q^{Y_{2}}(\tau ,x_{1},x_{2})%
\big).
\end{equation*}%
This is similar in spirit to Altonji and Matzkin (2005), just replacing
means by quantiles. 
\end{enumerate}

Note that the first special case answers one of the questions posed in the
introduction:\ should we consider the difference of the quantiles or the
quantiles of the differences, when talking about causal effects in panels.
In terms of the strength of the assumptions, the verdict has to be clearly
differences of quantiles. However, two remarks are in order: First, it also
happens to be the case that under the additional structure on the dependence
the quantiles of the difference yield a new effect that we could not have
obtained through differences in quantiles. In particular, for targeted
policy measures it may be sensible to use subpopulations that are defined
by, e.g., first differences $\Delta Y$. More precisely, since individuals
are often assumed to exhibit a pronounced loss aversion, i.e., they are more
much sensitive towards a negative change in their status than a positive, it
is conceivable that a policy maker would be much more interested in the
subpopulation for which the effect $\Delta Y$ is negative. Similarly,
measures that focus on the subpopulation exhibiting large values of $\Delta Y
$ may be of interest, as high variance of $Y$ over time may not be a
desirable feature for an individual. 

Second, with more time periods we could weaken the restrictive independence
assumptions. In particular, if three periods are available and only effects
on supopulations defined by, say, first differences between two periods are
of interest, we may allow for more correlation between the unobservables and
the $X_{t}$ process, and use the third period to perform an analogous
correction as in the previous section. Since this involves a simple
combination of arguments, we do not elaborate on this further, and we still
want to point to the difference in assumptions in the two periods case.

\section{Time Effects}

The time homogeneity assumption is a strong one that often seems not to hold
in applications. In this section we consider one way to weaken it, by
allowing for additive location effects and multiplicative scale effects.
Allowing for such time effects leads to effects of interest being exactly
identified, unlike the overidentification we found in Sections 2 and 3.

We allow for time effects by replacing Assumption 1 with the following
condition.

\bigskip

\begin{ass}
\label{ass:model2} There are functions $\phi,$ $\mu_t,$ and $\sigma_t,$ and
a vector of random variables $U_{t}$ such that 
\begin{equation*}
Y_{t}=\mu_t(X_{t})+ \sigma_t(X_{t}) \phi (X_{t}, U_{t}), \ \ (t = 1,2).
\end{equation*}
\end{ass}

The time effects $\mu_t$ and $\sigma_t$ are not separately identifiable from 
$\phi$ without location and scale normalizations because 
\begin{equation*}
\mu_t(x)+ \sigma_t(x) \phi (x, u) = \tilde \mu_t(x)+ \tilde \sigma_t(x)
\tilde \phi (x, u),
\end{equation*}
for $\tilde \mu_t(x) = \mu_t(x) + \sigma_t(x) \Delta_{\mu} (x),$ $\tilde
\sigma_t(x) = \Delta_{\sigma}(x )\sigma_t(x),$ $\tilde \phi(x,u) =
[\phi(x,u) - \Delta_{\mu}(x)] /\Delta_{\sigma}(x)$, and $\Delta_{\sigma}(x)
\neq 0$.

In this model the effects of interest vary with time. We consider the
time-averaged conditional mean effect: 
\begin{equation*}
\partial_x \bar \mu(x) + \partial_x \bar \sigma(x) E[\phi(x,U_t) \mid X_1 =
X_2 =x] + \bar \sigma(x) E[\partial_x \phi(x,U_t) \mid X_1 = X_2 =x],
\end{equation*}
and the time-averaged conditional quantile effect: 
\begin{equation*}
\partial_x \bar \mu(x) + \partial_x \bar \sigma(x) q(\tau,x) + \bar
\sigma(x) E[\partial_x \phi(x,U_t) \mid \phi(x,U_t) = q(\tau, x), X_1 = X_2
=x],
\end{equation*}
where $\bar \mu(x) = [\mu_1(x) + \mu_2(x)]/2,$ $\bar \sigma(x) =
[\sigma_1(x) + \sigma_2(x)]/2,$ and $q(\tau, x)$ is the $\tau^{th}$
conditional quantile of $\phi(x,U_t)$ given $X_1 = X_2 = x.$

The conditional mean effect is related to the time-averaged CASF: 
\begin{equation*}
\bar m(x \mid {\boldsymbol{x}}) = \bar \mu(x) + \bar \sigma(x) E[\phi(x,U_t)
\mid {\boldsymbol{X}} = {\boldsymbol{x}}],
\end{equation*}
through 
\begin{equation*}
\partial_x \bar m(x \mid {\boldsymbol{x}})\Big|_{{\boldsymbol{x}} = (x,x)} =
\partial_x \bar \mu(x) + \partial_x \bar \sigma(x) E[\phi(x,U_t) \mid X_1 =
X_2 =x] + \bar \sigma(x) E[\partial_x \phi(x,U_t) \mid X_1 = X_2 =x],
\end{equation*}
under the conditions that permit interchanging the derivative and
expectation. Similarly, the conditional quantile effect is related to the
time-averaged CQSF, $\bar q_{\tau}(x \mid {\boldsymbol{X}} = {\boldsymbol{x}}%
)$, that gives the $\tau$-quantile of $\bar \mu(x) + \bar \sigma(x)
\phi(x,U_t)$ conditional on ${\boldsymbol{X}}= {\boldsymbol{x}}$, through 
\begin{equation*}
\partial_x \bar q_{\tau}(x \mid {\boldsymbol{x}})\Big|_{{\boldsymbol{x}} =
(x,x)} =\partial_x \bar \mu(x) + \partial_x \bar \sigma(x) q(\tau,x) + \bar
\sigma(x) E[\partial_x \phi(x,U_t) \mid \phi(x,U_t) = q(\tau, x), X_1 = X_2
=x].
\end{equation*}

Let $V_t({\boldsymbol{x}}) =\text{Var}[Y_t \mid {\boldsymbol{X}} = {%
\boldsymbol{x}}],$ and $\sigma(x) = \sigma_2(x)/\sigma_1(x).$

\medskip

\begin{theorem}
\label{th:casf_te} Suppose that Assumptions \ref{ass:distrinv} and \ref%
{ass:model2} are satisfied, $E[Y_{t}^2]<\infty ,$ $(t=1,2),$ $V_t(x,x) > 0,$ 
$(t=1,2),$ $\phi (x,u), $ $\mu_t(x),$ and $\sigma_t(x),$ $(t=1,2),$ are
continuously differentiable in $x,$ and the conditional density of $U_{t}$
given ${\boldsymbol{X}}={\boldsymbol{x}},$ $f(u|{\boldsymbol{x}}),$ is
continuously differentiable in ${\boldsymbol{x}}$. Given $x$, suppose that
for some $\varepsilon >0$, 
\begin{align*}
& \int \sup_{\left\Vert \delta \right\Vert \leq \varepsilon, \delta =
(\delta_0^{\prime }, \delta_1^{\prime }, \delta_2^{\prime })^{\prime }}%
\big\Vert \partial_x \phi(x+\delta_0 ,u)\, f(u|x+\delta_{1},x+\delta _{2}) %
\big\Vert du<\infty , \\
& \int \sup_{\left\Vert \delta \right\Vert \leq \varepsilon, \delta =
(\delta_0^{\prime }, \delta_1^{\prime }, \delta_2^{\prime })^{\prime }}%
\big\Vert \phi(x+\delta_0 ,u)\, \partial_{x_s} f(u|x+\delta_{1},x+\delta
_{2}) \big\Vert du<\infty, \quad s=1,2.
\end{align*}
Then, $\sigma^2(x) = V_2(x,x)/V_1(x,x),$ $\mu_2(x) - \mu_1(x) \sigma(x) =
E[Y_2 - \sigma(x) Y_1 \mid X_1 = X_2 =x],$ \textit{and } 
\begin{multline*}
\partial_x \bar \mu(x) + \partial_x \bar \sigma(x) E[\phi(x,U_t) \mid X_1 =
X_2 =x] + \bar \sigma(x) E[\partial_x \phi(x,U_t) \mid X_1 = X_2 =x] \\
= [\partial_{x_1} M_1(x,x) - \partial_{x_1} M_2(x,x)/\sigma(x)]/2 +
[\partial_{x_2} M_2(x,x) - \sigma(x) \partial_{x_2} M_1(x,x)]/2.
\end{multline*}
\end{theorem}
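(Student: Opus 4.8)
The plan is to reduce the time-effects model of Assumption~\ref{ass:model2} to the time-homogeneous case of Section~2 by a change of variables, exploiting that $\phi(x,U_t)$ has a distribution conditional on $(\boldsymbol{X},U_t$-representation$)$ not depending on $t$ by Assumption~\ref{ass:distrinv}. First I would identify $\sigma(x)$ from second moments: writing $V_t(\boldsymbol{x})=\sigma_t(X_t)^2\,\mathrm{Var}[\phi(X_t,U_t)\mid\boldsymbol{X}=\boldsymbol{x}]$ and evaluating at $\boldsymbol{x}=(x,x)$, time homogeneity makes the conditional variance of $\phi(x,U_t)$ the same for $t=1,2$, so it cancels in the ratio and $V_2(x,x)/V_1(x,x)=\sigma_2(x)^2/\sigma_1(x)^2=\sigma(x)^2$. (The hypothesis $V_t(x,x)>0$ is what makes the ratio well-defined; $E[Y_t^2]<\infty$ is used to guarantee the variances are finite.) Next, consider the transformed outcome $Z:=Y_2-\sigma(x)Y_1$ evaluated along $X_1=X_2=x$; substituting Assumption~\ref{ass:model2},
\[
Z=\mu_2(x)+\sigma_2(x)\phi(x,U_2)-\sigma(x)\big[\mu_1(x)+\sigma_1(x)\phi(x,U_1)\big]
=\big[\mu_2(x)-\sigma(x)\mu_1(x)\big]+\sigma_2(x)\big[\phi(x,U_2)-\phi(x,U_1)\big].
\]
Taking $E[\cdot\mid X_1=X_2=x]$ and using $E[\phi(x,U_2)\mid\boldsymbol{X}=(x,x)]=E[\phi(x,U_1)\mid\boldsymbol{X}=(x,x)]$ from Assumption~\ref{ass:distrinv} kills the $\phi$-difference term, giving $E[Y_2-\sigma(x)Y_1\mid X_1=X_2=x]=\mu_2(x)-\mu_1(x)\sigma(x)$, the second claimed identity.

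For the main derivative identity, I would compute $\partial_{x_s}M_t(\boldsymbol{x})$ directly from $M_t(\boldsymbol{x})=\mu_t(x_t)+\sigma_t(x_t)\int\phi(x_t,u)f(u\mid\boldsymbol{x})\,du$, using the two domination conditions in the statement (exactly the conditions of Theorem~\ref{theorem:mean}, which justify differentiating under the integral and, by the same argument as in \eqref{cond mean eq}, guarantee $\int\phi(x_t,u)\partial_{x_s}f(u\mid\boldsymbol{x})\,du$ is finite and that the interchange is legitimate). Differentiating with respect to the ``own'' argument $x_t$ produces three pieces: $\partial_x\mu_t(x_t)$, $\partial_x\sigma_t(x_t)\int\phi\,f$, and $\sigma_t(x_t)\int\partial_x\phi\,f$; differentiating with respect to either argument also produces the heterogeneity-bias term $\sigma_t(x_t)\int\phi(x_t,u)\partial_{x_s}f(u\mid\boldsymbol{x})\,du$. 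Evaluating at $\boldsymbol{x}=(x,x)$ and using Assumption~\ref{ass:distrinv} (so $f(u\mid x,x)$, $\int\phi f$, and $\int\partial_x\phi f$ are common to both $t$, and the heterogeneity-bias integrals coincide across $t$), I would form the two combinations $\partial_{x_1}M_1(x,x)-\partial_{x_1}M_2(x,x)/\sigma(x)$ and $\partial_{x_2}M_2(x,x)-\sigma(x)\partial_{x_2}M_1(x,x)$. In the first, the $\partial_{x_1}$-derivative of $M_1$ hits only $f$ (since $x_1$ is the ``other'' argument for $t=1$... wait—for $M_1$, $x_1$ \emph{is} the own argument) — more carefully: for $M_1$ the own argument is $x_1$ and for $M_2$ it is $x_2$, so $\partial_{x_1}M_1$ picks up all three structural pieces plus bias while $\partial_{x_1}M_2$ picks up only bias; dividing the latter by $\sigma(x)=\sigma_2(x)/\sigma_1(x)$ and subtracting is engineered so that the $\sigma_2$-weighted bias term in $M_2$ rescales to the $\sigma_1$-weighted bias term and cancels the bias in $M_1$ after we use that the bias integrals agree across $t$. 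The symmetric combination does the same with roles of time periods and arguments swapped. Averaging the two cancellations and collecting $\partial_x\mu_1,\partial_x\mu_2$ into $2\partial_x\bar\mu$, the $\sigma$-derivative terms into $2\partial_x\bar\sigma\cdot E[\phi\mid X_1=X_2=x]$, and the $\sigma$-weighted $\partial_x\phi$ integrals into $2\bar\sigma\cdot E[\partial_x\phi\mid X_1=X_2=x]$ yields the displayed right-hand side.

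The main obstacle is the bookkeeping in the second paragraph: one must track, for each of $M_1$ and $M_2$, which of the four derivative contributions ($\mu_t'$, $\sigma_t'\!\int\!\phi f$, $\sigma_t\!\int\!\partial_x\phi f$, $\sigma_t\!\int\!\phi\,\partial_{x_s}f$) survives under $\partial_{x_1}$ versus $\partial_{x_2}$, and verify that the specific weights $1/\sigma(x)$ and $\sigma(x)$ in the two combinations are precisely what makes the heterogeneity-bias terms cancel after invoking Assumption~\ref{ass:distrinv}; the factor-of-$1/2$ averaging then symmetrizes $\mu_1,\mu_2$ and $\sigma_1,\sigma_2$ into $\bar\mu,\bar\sigma$. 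No genuinely new analytic input is needed beyond Theorem~\ref{theorem:mean}'s domination argument for the differentiation-under-the-integral step; everything else is algebra driven by the variance-ratio identification of $\sigma(x)$.
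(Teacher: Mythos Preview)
Your proposal is correct and follows essentially the same route as the paper: identify $\sigma(x)$ from the variance ratio, identify the location combination from $E[Y_2-\sigma(x)Y_1\mid X_1=X_2=x]$, then compute $\partial_{x_s}M_t(\boldsymbol{x})$ via the domination conditions (as in Theorem~\ref{theorem:mean}), form the two weighted differences so that the $\sigma_t$-scaled heterogeneity-bias integrals cancel, and average to produce $\bar\mu,\bar\sigma$. The only cosmetic difference is that the paper packages the three structural pieces into $\phi_t(x,u)=\mu_t(x)+\sigma_t(x)\phi(x,u)$ and writes each weighted difference as $E[\partial_x\phi_t(x,U_t)\mid X_1=X_2=x]$ before averaging, whereas you keep the pieces separate throughout; the mid-paragraph self-correction about which argument is ``own'' for $M_1$ should be cleaned up but the final bookkeeping you land on is right.
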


\bigskip

This theorem shows that the time effects are identified up to location and
scale normalizations. For example, if we set $\mu_1(x) = 0$ and $\sigma_1(x)
=1,$ then $\sigma_2^2(x) = V_2(x,x)/V_1(x,x)$ and $\mu_2(x) = E[Y_2 -
\sigma_2(x) Y_1 \mid X_1 = X_2 =x]$. The identification of the conditional
mean effect does not require any normalization. Note that we now have just
one equation for identifying the conditional mean effect. 

\bigskip

We find a similar result for quantiles.

\begin{theorem}
\label{th:cqsf_te} Suppose that Assumptions \ref{ass:distrinv} , \ref%
{ass:regquantile} , and \ref{ass:model2} are satisfied, $\mu_t(x)$ and $%
\sigma_t(x)$ are continuously differentiable in $x$ and $\sigma_t(x) > 0,$ $%
(t=1,2),$ $f(u|{\boldsymbol{x}})$ is continuously differentiable in ${%
\boldsymbol{x}}$, 
\begin{equation}  \label{eq:dommain_te}
\int \sup_{\left\Vert \Delta _{{\boldsymbol{x}}}\right\Vert \leq
\delta}\left\Vert \partial_{\boldsymbol{x}} f(u|{\boldsymbol{x}}+\Delta _{{%
\boldsymbol{x}}})\right\Vert du<\infty ,
\end{equation}
and the conditional density of $Y_t$ given ${\boldsymbol{X}}$ is positive on
the interior of its support. Then for all $0<\tau <1,$ $Q_{t}(\tau | {%
\boldsymbol{x}})$ exists and is continuously differentiable at ${\boldsymbol{%
x}} = (x^{\prime },x^{\prime })^{\prime }$ such that 
\begin{multline*}
\partial_x \bar \mu(x) + \partial_x \bar \sigma(x) q(\tau,x) + \bar
\sigma(x) E[\partial_x \phi(x,U_t) \mid \phi(x,U_t) = q(\tau,x), X_1 = X_2
=x] \\
= [\partial_{x_1} Q_1(\tau \mid x,x) - \partial_{x_1} Q_2(\tau \mid
x,x)/\sigma(x)]/2 + [ \partial_{x_2} Q_2(\tau \mid x,x) - \sigma(x)
\partial_{x_2} Q_1(\tau \mid x,x)]/2,
\end{multline*}
$\sigma(x) = [ Q_2(\tau_1 \mid x,x) - Q_2(\tau_2 \mid x,x)]/[Q_1(\tau_1 \mid
x,x) - Q_1(\tau_2 \mid x,x)],$ and $\mu_2(x) - \sigma(x) \mu_1(x) =
Q_2(\tau_3 \mid x,x) - \sigma(x) Q_1(\tau_3 \mid x,x),$ for any $0 < \tau_3
< 1$ and $0 < \tau_2 < \tau_1 < 1$ such that $[Q_1(\tau_1 \mid x,x) -
Q_1(\tau_2 \mid x,x)] > 0$.
\end{theorem}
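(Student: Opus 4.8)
The plan is to reduce the claim to the time‑homogeneous quantile result of Theorem~\ref{theorem:quant}, but applied to the ``structural'' outcome $\phi(X_t,U_t)$ rather than to $Y_t$ itself, and then to read off the three assertions by a short computation exploiting the additive/multiplicative form of the time effects. Conditional on $\boldsymbol{X}=\boldsymbol{x}=(x_1',x_2')'$ the numbers $\mu_t(x_t)$ and $\sigma_t(x_t)>0$ are constants, so $Y_t$ is an increasing affine transformation of $\phi(x_t,U_t)$ and hence
\[
Q_t(\tau\mid\boldsymbol{x})=\mu_t(x_t)+\sigma_t(x_t)\,\widetilde Q_t(\tau\mid\boldsymbol{x}),
\]
where $\widetilde Q_t(\tau\mid\boldsymbol{x})$ is the $\tau$-quantile of $\phi(x_t,U_t)$ given $\boldsymbol{X}=\boldsymbol{x}$. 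Because $\sigma_t(x_t)>0$, positivity of the conditional density of $Y_t$ given $\boldsymbol{X}$ on the interior of its support is equivalent to the same property for $\phi(x_t,U_t)$, so the outcome $Y_t^{\ast}:=\phi(X_t,U_t)$ satisfies every hypothesis of Theorem~\ref{theorem:quant}: Assumption~\ref{assum:model} holds trivially (there is no time effect inside $\phi$), and Assumptions~\ref{ass:distrinv} and \ref{ass:regquantile}, continuous differentiability of $f(u\mid\boldsymbol{x})$ in $\boldsymbol{x}$, and the domination bound (\ref{eq:dommain_te}), which is exactly the condition (\ref{eq:dommain}) needed there, are all imposed in the present statement. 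Thus $\widetilde Q_t(\tau\mid\boldsymbol{x})$ exists, is continuously differentiable, and satisfies (\ref{eq:quantileeffect}); together with Assumption~\ref{ass:distrinv} (which makes the conditional law of $U_t$ given $\boldsymbol{X}$ free of $t$) this yields $\widetilde Q_1(\tau\mid x,x)=\widetilde Q_2(\tau\mid x,x)=q(\tau,x)$ and, writing $D(\tau,x):=E[\partial_x\phi(x,U_t)\mid\phi(x,U_t)=q(\tau,x),X_1=X_2=x]$,
\[
\partial_{x_1}\widetilde Q_1(\tau\mid x,x)-\partial_{x_1}\widetilde Q_2(\tau\mid x,x)=\partial_{x_2}\widetilde Q_2(\tau\mid x,x)-\partial_{x_2}\widetilde Q_1(\tau\mid x,x)=D(\tau,x).
\]

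Next, since $\mu_t$ and $\sigma_t$ depend on $x_t$ only, differentiating the affine relation and evaluating at $\boldsymbol{x}=(x,x)$ gives, for $t=1,2$ and $s\neq t$,
\begin{align*}
\partial_{x_t}Q_t(\tau\mid x,x)&=\partial_x\mu_t(x)+\partial_x\sigma_t(x)\,q(\tau,x)+\sigma_t(x)\,\partial_{x_t}\widetilde Q_t(\tau\mid x,x),\\
\partial_{x_s}Q_t(\tau\mid x,x)&=\sigma_t(x)\,\partial_{x_s}\widetilde Q_t(\tau\mid x,x).
\end{align*}
Substituting these into the right-hand side of the asserted identity with $\sigma(x)=\sigma_2(x)/\sigma_1(x)$, the first bracket collapses to $\partial_x\mu_1(x)+\partial_x\sigma_1(x)\,q(\tau,x)+\sigma_1(x)\big(\partial_{x_1}\widetilde Q_1(\tau\mid x,x)-\partial_{x_1}\widetilde Q_2(\tau\mid x,x)\big)$ and the second to $\partial_x\mu_2(x)+\partial_x\sigma_2(x)\,q(\tau,x)+\sigma_2(x)\big(\partial_{x_2}\widetilde Q_2(\tau\mid x,x)-\partial_{x_2}\widetilde Q_1(\tau\mid x,x)\big)$; by the last display both parenthesised differences equal $D(\tau,x)$, so halving the sum produces $\partial_x\bar\mu(x)+\partial_x\bar\sigma(x)\,q(\tau,x)+\bar\sigma(x)\,D(\tau,x)$, which is the time-averaged conditional quantile effect. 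For the time effects, from $Q_t(\tau\mid x,x)=\mu_t(x)+\sigma_t(x)q(\tau,x)$ one gets $Q_t(\tau_1\mid x,x)-Q_t(\tau_2\mid x,x)=\sigma_t(x)\,[q(\tau_1,x)-q(\tau_2,x)]$, so dividing the $t=2$ relation by the $t=1$ relation — admissible since the denominator is assumed positive — delivers the stated formula for $\sigma(x)$, while the $q(\tau_3,x)$ terms cancel in $Q_2(\tau_3\mid x,x)-\sigma(x)Q_1(\tau_3\mid x,x)$, leaving $\mu_2(x)-\sigma(x)\mu_1(x)$.

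The only genuinely delicate step is the reduction itself: one must check that each hypothesis of Theorem~\ref{theorem:quant} is inherited by $\phi(X_t,U_t)$ — chiefly that density positivity on the interior of the support survives the affine change of variables (it does, because $\sigma_t>0$) and that the smoothness and domination conditions appearing there coincide with those assumed in the present statement. Everything past that point is the routine differentiation of the affine relation and the cancellation recorded above.
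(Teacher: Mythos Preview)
Your proof is correct and arrives at the same conclusion as the paper, but the organization is genuinely different. The paper works directly with the composite function $\phi_t(x,u)=\mu_t(x)+\sigma_t(x)\phi(x,u)$: it re-invokes Lemma~\ref{lemm:quantilemain} for $\phi_t$, applies the implicit function theorem to $H_t(y,x,\boldsymbol{x})=\Pr(\phi_t(x,U_t)\le y\mid\boldsymbol{X}=\boldsymbol{x})$, computes $\partial_{x_s}Q_t(\tau\mid\boldsymbol{x})$ from scratch, and then forms the two weighted differences $\partial_{x_1}Q_1-\sigma^{-1}\partial_{x_1}Q_2$ and $\partial_{x_2}Q_2-\sigma\,\partial_{x_2}Q_1$ before averaging. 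You instead factor through Theorem~\ref{theorem:quant} as a black box applied to the structural outcome $\phi(X_t,U_t)$, obtain the identity~(\ref{eq:quantileeffect}) for the auxiliary quantiles $\widetilde Q_t$, and then pull the result back to $Q_t$ via the affine relation $Q_t=\mu_t+\sigma_t\widetilde Q_t$. Your route is more modular---it reuses Theorem~\ref{theorem:quant} rather than reproving it---and it sidesteps having to verify that the regularity bounds in Assumption~\ref{ass:regquantile} transfer from $\phi$ to $\phi_t$ (a step the paper leaves implicit). The paper's route, on the other hand, makes the role of the weights $\sigma(x)$ and $\sigma(x)^{-1}$ in cancelling the heterogeneity-bias term more transparent, since that term is seen explicitly as $\sigma_t(x)$ times a $t$-independent integral. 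The identification of $\sigma(x)$ and of $\mu_2-\sigma\mu_1$ is handled identically in both arguments.
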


As in Theorem \ref{th:casf_te}, the time effects are identified up to
location and scale normalizations, whereas the conditional quantile effects
are identified without any normalization. Here, however, instead of
conditional mean and variance restrictions, we use quantile restrictions to
identify the time effects up to the normalizations. These effects are over
identified by many possible quantiles $\tau _{1},$ $\tau _{2}$ and $\tau
_{3} $. For example, for $\tau _{1}=.9,$ $\tau _{2}=.1$ and $\tau _{3}=.5,$
the scale is identified by a ratio of conditional interdecile ranges across
time and the location is identified by a difference of conditional medians
across time.

We note that Graham and Powell (2012) allowed for random time effects in
location and slope rather than location and scale effects that could depend
on $X$.

%
\bigskip

\section{Estimation and inference}

\label{sec:estimation}

The conditional mean and quantile effects of interest are identified by
special cases of the functionals: 
\begin{equation*}
\theta_m(x) = h_m(\{M_t(x,x), V_t(x,x) : t =1,2\}), \ \ x \in \mathcal{X},
\end{equation*}
and 
\begin{equation*}
\theta_q(w) = h_q(\{Q_t(\tau \mid x,x): t =1,2\}), \ \ w = (x,\tau) \in 
\mathcal{W},
\end{equation*}
respectively, where $h_m$ and $h_q$ are known smooth functions, $\mathcal{X}$
is a region of regressor values of interest, and $\mathcal{W}$ is a region
of regressor values and quantiles of interest. We consider the estimators of 
$\theta_m$ and $\theta_q$ based on the plug-in rule: 
\begin{equation*}
\widehat \theta_m(x) = h_m(\{\widehat M_t(x,x), \widehat V_t(x,x) : t
=1,2\}), \ \ x \in \mathcal{X},
\end{equation*}
and 
\begin{equation*}
\widehat \theta_q(w) = h_q(\{ \widehat Q_t(\tau \mid x,x) : t =1,2\}), \ \ w
= (x,\tau) \in \mathcal{W},
\end{equation*}
where $\widehat M_t(x,x)$, $\widehat Q_t(\tau \mid x,x)$, and $\widehat
V_t(x,x)$ are nonparametric series estimators of $M_t(x,x)$, $Q_t(\tau \mid
x,x)$, and $V_t(x,x)$.

To describe the series estimators, let $P^K({\boldsymbol{x}}) = (p_{1K}({%
\boldsymbol{x}}), \ldots, p_{KK}({\boldsymbol{x}}))^{\prime }$ denote a $K
\times 1$ vector of approximating functions, such as tensor products of
univariate polynomial or spline series terms of the components of $%
\boldsymbol{x}$, and let $\boldsymbol{P}_i = P^K({\boldsymbol{X}}_i)$. Then, 
\begin{equation*}
\widehat M_t(x,x) = P^K(x,x)^{\prime }\left( \sum_{i=1}^n \boldsymbol{P}_i 
\boldsymbol{P}_i^{\prime }\right)^- \sum_{i=1}^n \boldsymbol{P}_i Y_{it},
\end{equation*}
where $A^-$ denotes any generalized inverse inverse of the matrix $A$; 
\begin{equation*}
\widehat V_t(x,x) = P^K(x,x)^{\prime }\left( \sum_{i=1}^n \boldsymbol{P}_i 
\boldsymbol{P}_i^{\prime }\right)^- \sum_{i=1}^n \boldsymbol{P}_i [Y_{it} -
\widehat M_t({\boldsymbol{X}}_i)]^2
\end{equation*}
is a series version of the (kernel) conditional variance estimator of Fan
and Yao (1998); and $\widehat Q_t(\tau \mid x,x) = P^K(x,x)^{\prime
}\widehat \beta_t(\tau),$ where $\widehat \beta_t(\tau)$ is the Koenker and
Bassett (1978) quantile regression estimator 
\begin{equation*}
\widehat \beta_t(\tau) \in \arg\min_{b \in \mathbb{R}^{K}} \sum _{i=1}^{n}
[\tau - 1\{Y_{it} \leq \boldsymbol{P}_i^{\prime }b\}][Y_{it} -\boldsymbol{P}%
_i^{\prime }b].
\end{equation*}

%
%
%
%

Following  Praestgaard  and  Wellner (1993), Hahn (1995), and
Chamberlain and Imbens (2003), we use weighted bootstrap for inference.\footnote{See also Ma and Kosorok (2005)
and Chen and Pouzo (2009, 2013) for other applications of weighted
bootstrap; we are grateful to a referee for pointing out the latter
references.}   To describe this method, let $%
(w_{1},\ldots,w_{n})$ be an i.i.d. sequence of nonnegative random variables
from a distribution with mean and variance equal to one (e.g., the standard
exponential distribution), independent of the data. The weighted bootstrap
uses the components of $(w_{1},\ldots,w_{n})$ as random sampling weights in
the construction of the bootstrap version of the series estimators. Thus,
the bootstrap versions of $\widehat \theta_m(w)$ and $\widehat \theta_q(w)$
are 
\begin{equation*}
\widehat \theta_m^*(x) = h(\{\widehat M^*_t(x,x), \widehat V^*_t(x,x) : t
=1,2\}), \ \ x \in \mathcal{X},
\end{equation*}
and 
\begin{equation*}
\widehat \theta_q^*(w) = h(\{\widehat Q^*_t(\tau \mid x,x) : t =1,2\}), \ \
w = (x,\tau) \in \mathcal{W},
\end{equation*}
where 
\begin{equation*}
\widehat M^*_t(x,x) = P^K(x,x)^{\prime }\left( \sum_{i=1}^n w_i \boldsymbol{P%
}_i \boldsymbol{P}_i^{\prime }\right)^- \sum_{i=1}^n w_i \boldsymbol{P}_i
Y_{it}
\end{equation*}
is the bootstrap version of $\widehat M_t(x,x),$ 
\begin{equation*}
\widehat V^*_t(x,x) = P^K(x,x)^{\prime }\left( \sum_{i=1}^n w_i \boldsymbol{P%
}_i \boldsymbol{P}_i^{\prime }\right)^- \sum_{i=1}^n w_i \boldsymbol{P}_i
[Y_{it} - \widehat M^*_t({\boldsymbol{X}}_i)]^2
\end{equation*}
is the bootstrap version of $\widehat V^*_t(x,x),$ and $\widehat Q^*_t(\tau
\mid x,x) = P^K(x,x)^{\prime }\widehat \beta^*_t(\tau)$ is the bootstrap
version of $\widehat Q_t(\tau \mid x,x)$, with 
\begin{equation*}
\widehat \beta^*_t(\tau) = \arg\min_{b \in \mathbb{R}^{K}} \sum _{i=1}^{n}
w_i [\tau - 1\{Y_{it} \leq \boldsymbol{P}_i^{\prime }b\}][Y_{it} -%
\boldsymbol{P}_i^{\prime }b].
\end{equation*}

%
%

Belloni, Chernozhukov, Chetverikov, and Kato (2013) and Chernozhukov, Lee,
and Rosen (2013) developed functional distributional theory and bootstrap
consistency for series estimators of functionals of the conditional mean
function, and Belloni, Chernozhukov, and Fernandez-Val (2011) developed
similar theory for series estimators of functionals of the conditional
quantile function. We can use these results to construct analytical or
bootstrap confidence bands for the effects that have uniform asymptotic
coverage over regressor values and quantiles. For example, the end-point
functions of a $1-\alpha$ confidence band for $\theta_q$ have the form 
\begin{equation}  \label{eq: bands}
\widehat \theta_q^{\pm} (w) = \widehat \theta_q(w) \pm \widehat
t_{q,1-\alpha} \widehat \Sigma_q(w)^{1/2}/\sqrt{n},
\end{equation}
where $\widehat \Sigma_q(w)$ and $\widehat t_{q,1-\alpha}$ are consistent
estimators of the asymptotic variance function of $\sqrt{n} [\widehat{\theta}%
_q(w) - \theta_q(w)]$ and the $1 - \alpha$ quantile of the
Kolmogorov-Smirnov maximal $t$-statistic 
\begin{equation*}
t_q = \sup_{w \in {\mathcal{W}}} \widehat \Sigma_q(w)^{-1/2} \sqrt{n} |%
\widehat{\theta}_q(w) - \theta_q(w)|.
\end{equation*}

The following algorithm describes how to obtain uniform bands for quantile
effects using weighted bootstrap:

\begin{algorithm}[Uniform inference]
\label{alg:bands} \label{algorithm: inference} (i) Draw $\{\widehat{Z}%
_{q,b}^{\ast }:1\leq b\leq B\}$ as i.i.d. realizations of $\widehat{Z}%
_q^{\ast }(w)= \sqrt{n}[\widehat{\theta}_q^*(w)-\widehat{\theta}_q(w)] ,$
for $w\in {\mathcal{W}},$ conditional on the data. (ii) Compute a bootstrap
estimate of $\Sigma_q(w)^{1/2}$ such as the bootstrap standard deviation: $%
\widehat{\Sigma }_q(w)^{1/2}= \{\sum_{b=1}^B [\widehat{Z}_{q,b}^{\ast }(w) - 
\overline{Z}^{\ast } _{q}(w)]^2/(B-1)\}^{1/2}$ for $w\in {\mathcal{W}}$,
where $\overline{Z}^{\ast } _{q}(w) = \sum_{b=1}^B \widehat{Z}_{q,b}^{\ast
}(w) /B$; or the bootstrap interquartile range of $\widehat{Z}_q^{\ast }(w)$
rescaled with the normal distribution: $\widehat{\Sigma }_q(w)^{1/2}=[%
\widehat{Z}_{q,0.75}^{\ast }(w)-\widehat{Z}_{q,0.25}^{\ast }(w)]/1.349$ for $%
w\in {\mathcal{W}}$, where $\widehat{Z}_{q,p}^{\ast }(w)$ is the $p$-sample
quantile of $\{\widehat{Z}_{q,b}^{\ast }(w):1\leq b\leq B\} $. (3) Compute
realizations of the bootstrap version of the maximal t-statistic $\widehat{t}%
^*_{q,b}=\sup_{w\in {\mathcal{W}}} \widehat{\Sigma }_q(w)^{-1/2}|\widehat{Z}%
_{q,b}^{\ast }(w)|$ for $1 \leq b \leq B.$ (iii) Form a $(1-\alpha ) $%
-confidence band for $\{\theta(w)_q:w\in {\mathcal{W}} \}$ using (\ref{eq:
bands}) setting $\widehat{t}_{q,1-\alpha}$ to the $(1-\alpha ) $-sample
quantile of $\{\widehat{t}^*_{q,b}:1\leq b\leq B\}$. \qed
\end{algorithm}

The validity of Algorithm \ref{alg:bands} follows from the results in
Belloni, Chernozhukov, and Fernandez-Val (2011) and the delta method. We can
construct uniform bands for the conditional mean effects with a similar
algorithm replacing $\theta_q(w)$ by $\theta_m(x)$, adjusting all the steps
accordingly, and relying on the results of Belloni, Chernozhukov,
Chetverikov, and Kato (2013) and Chernozhukov, Lee, and Rosen (2013).

%

\section{Engel Curves in Panel Data}

\label{engel}

In this section, we illustrate the results with an empirical application on
estimation of Engel curves with panel data. The Engel curve relationship
describes how a household's demand for a commodity changes as the
household's expenditure increases. Lewbel (2006) provides a recent survey of
the extensive literature on Engel curve estimation. We use data from the
2007 and 2009 waves of the Panel Study of Income Dynamics (PSID). Since
2005, the PSID gathers information on household expenditure for different
categories of commodities. The PSID does not collect information on total
expenditure. We construct the total expenditure on nondurable goods and
services by adding all the expenses in housing, utilities, phone, child
care, food at home, food out from home, car, transportation, schooling,
clothing, leisure, and health. We exclude expenses in mortgage, home
insurance, car insurance, and health insurance because these categories have
many missing values. Our sample contains $968$ households formed by couples
without children, where the head of the household was 20 to 65 year-old in
2009, and that provided information about all the relevant categories of
expenditure in 2007 and 2009. We focus on the commodities food at home and
leisure for comparability with recent studies (e.g., Blundell, Chen, and
Kristensen (2007), Chen and Pouzo (2009, 2013), and Imbens and Newey
(2009)). The expenditure share on a commodity is constructed by dividing the
expenditure in this commodity by the total expenditure in nondurable goods
and services.

\medskip

Endogeneity in the estimation of Engel curves arises because the decision to
consume a commodity may occur simultaneously with the allocation of income
between consumption and savings. In contrast with the previous cross
sectional literature, we do not rely on a two-stage budgeting argument that
justifies the use of labor income as an instrument for expenditure. Instead,
we assume that the Engel curve relationships are time homogeneous up to
location and scale time effects, and rely on the availability of panel data.
Specifically, we estimate 
\begin{equation*}
Y_{it} = \mu_t(X_{it}) + \sigma_t(X_{it}) \phi(X_{it}, U_{it}), \ \ i = 1,
\ldots, 968, \ t = 1, 2,
\end{equation*}
where $Y$ is the observed share of total expenditure on food at home or
leisure, $X$ is the logarithm of total expenditure in dollars of 2005, $%
\mu_t(X)$ and $\sigma_t(X)$ are location and scale time effects, $U$ is a
vector of unobserved household heterogeneity that satisfies time homogeneity
and captures both differences in preferences and idiosyncratic household
shocks, $t=1$ corresponds to 2007, and $t=2$ corresponds to 2009.\footnote{%
To deflate total expenditure, we use a price index for personal consumption
expenditures in nondurable goods constructed from Tables 2.4.4 and 2.4.5 of
the Bureau of Economic Analysis.} The inclusion of time effects might be
important to account for temporal changes in preferences and relative prices
across commodities. For example, the price index of nondurable goods
increased by $7\%$ between 2007 and 2009, whereas the price indexes for food
and leisure increased by $10\%$ and $6\%$ during the same period.\footnote{%
Source: Tables 2.4.4.U and 2.4.5 of Bureau of Economic Analysis.} We allow
these time effects to vary with total expenditure, what gives flexibility to
the model. This model does put some restrictions on interactions between
prices and heterogeneity, implying that price changes only shift the
location and scale of the distribution of demand.

\medskip

Table \ref{table:desc_stat} reports descriptive statistics for the variables
used in the analysis. Both total expenditure and expenditure shares display
within and between household variation, with means and standard deviations
that remain stable between 2007 and 2009. The low percentage of within
variation in expenditure indicates that there might be a substantial number
of households with zero or little change in expenditure across years. Figure %
\ref{fig:density} plots histogram and kernel estimates of the density of the
change in expenditure between 2007 and 2009. The kernel estimates are
obtained using a Gaussian kernel with Silverman's rule of thumb for the
bandwidth. The estimates confirm that there is a high density of households
with zero change in expenditure. Our methods will identify mean and quantile
effects for these households with $X_{i1} = X_{i2}$.

\begin{table}[htp]
\caption{Descriptive Statistics}
\label{table:desc_stat}
\begin{center}
\begin{tabular}{lccccccc}
\hline\hline
\multicolumn{1}{c}{} & \multicolumn{3}{c}{Pooled sample} & 
\multicolumn{2}{c}{2007 sample} & \multicolumn{2}{c}{2009 sample} \\ 
\multicolumn{1}{c}{Variable} & Mean & Std. Dev. & Within (\%) & Mean & Std.
Dev. & Mean & Std. Dev. \\ \hline
Log expenditure & 10.13 & 0.56 & 21 & 10.19 & 0.57 & 10.07 & 0.55 \\ 
Food share & 0.19 & 0.10 & 28 & 0.19 & 0.10 & 0.20 & 0.10 \\ 
Leisure Share & 0.10 & 0.10 & 26 & 0.10 & 0.09 & 0.10 & 0.10 \\ \hline\hline
\multicolumn{8}{l}{{\footnotesize {Note: the source of the data is the PSID.
The number of observations is 968 observations for each year.}}}%
\end{tabular}%
\end{center}
\end{table}

\medskip

We estimate the location time effects, scale time effects, conditional mean
effects, and conditional quantile effects using sample analogs of the
expressions in Theorems \ref{th:casf_te} and \ref{th:cqsf_te}. In
particular, we estimate the conditional expectation, variance, and quantile
functions by the nonparametric series methods described in Section \ref%
{sec:estimation}. We consider two different specifications for the series
basis in all the estimators: a quadratic orthogonal polynomial and a cubic
B-spline with three knots at the minimum, median and maximum of total
log-expenditure in the data set. Both specifications are additively
separable in the total log-expenditures of 2007 and 2009.\footnote{%
We select these specifications by under smoothing with respect to the
specification selected by cross validation applied to the estimators of the
conditional expectation function.} 
We also compute cross sectional estimates that do not account for
endogeneity. They are obtained by averaging the nonparametric series
estimates in 2007 and 2009 that use the same specification of series basis
as the panel estimates but only condition on contemporaneous expenditure. 
For inference, we construct 90\% confidence bands around the estimates by
weighted bootstrap with exponential weights and 499 repetitions. These bands
are uniform in that they cover the entire function of interest with 90\%
probability asymptotically.

\medskip

Figures \ref{fig:te2_asf_food_in} and \ref{fig:te2_asf_leisure} show the
estimates and confidence bands for the time effects functions: 
\begin{equation*}
x \mapsto \mu_2(x) - \mu_1(x) \sigma(x), \ \ x \mapsto \sigma(x) =
\sigma_2(x)/\sigma_1(x), \ \ x \in \mathcal{X},
\end{equation*}
based on Theorem \ref{th:cqsf_te} with $\tau_1 = .9,$ $\tau_2 = .1,$ and $%
\tau_3 = .5$, where $\mathcal{X}$ is the interval of values between the 0.10
and 0.90 sample quantiles of log-total expenditure. We find that we cannot
reject the hypothesis that there are no location and scale time effects for
food at home, whereas we find significant evidence of time effects for
leisure with both series specifications. In results not reported, we find
similar estimates and confidence bands for the time effects functions based
on conditional means and variances using Theorem \ref{th:casf_te}.

\medskip


Figure \ref{fig:dqsf} plots the estimates and confidence bands for the
time-averaged conditional quantile effects or CQSF derivates integrated over
the values of $x$: 
\begin{equation*}
\tau \mapsto \int \left\{ \bar{\mu}(x) + \partial_x \bar{\sigma}(x)
q(\tau,x)+ \bar{\sigma}(x) E[\partial_x \phi(x,U_{it}) \mid \phi(x,U_{it}) =
q(\tau,x), X_{i1} = X_{i2} = x] \right\} \mu(dx),
\end{equation*}
for $\tau \in \mathcal{T},$ where $\mu$ is the empirical measure of
log-expenditure, and $\mathcal{T} = [0.1,0.9].$ Here we find heterogeneity
in the Engel curve relationship across the distribution. The pattern of the
effect is increasing with the quantile index for both food at home and
leisure, although the estimates are not sufficiently precise to distinguish
these patterns from sampling noise. The cross sectional estimates plotted in
dashed lines lie outside the confidence band for leisure, indicating
significant evidence of endogeneity. We do not find such evidence for food
at home.

\medskip

In figures \ref{fig:dqsfs_food_in} and \ref{fig:dqsfs_leisure}, we show that
the panel estimates of the CQSF as a function of expenditure are decreasing
for food at home and increasing for leisure at low values of expenditure.
Imbens and Newey (2009) and Chen and Pouzo (2009, 2013) found similar
patterns in their estimates of the QSF and the quantile Engel curves,
respectively. Figure \ref{fig:dasf} plots the estimates and confidence bands
for the time-averaged conditional mean effects or CASF derivatives: 
\begin{equation*}
x \mapsto \partial_x \bar{\mu}(x) + \partial_x \bar{\sigma}(x)
E[\phi(x,U_{it}) \mid X_{i1} = X_{i2} = x] + \bar{\sigma}(x) E[\partial_x
\phi(x,U_{it}) \mid X_{i1} = X_{i2} = x], \ \ x \in \mathcal{X}.
\end{equation*}
We also again evidence of endogeneity for leisure in the mean effects, but
not for food at home. As in Blundell, Chen and Kristensen (2007), the
conditional ASF is decreasing in expenditure for food at home, whereas it is
increasing for leisure. We find that the curve is convex for food at home
and concave for leisure. Note, however, that we should interpret the shape
of our panel estimates with caution because they formally correspond to
multiple conditional QSFs and ASFs as the conditioning set $X_1=X_2 =x$
changes with $x$ along the curve.


\medskip

Overall, the empirical results show that our panel estimates of the Engel
curves are similar to previous cross sectional estimates based on IV methods
to deal with endogeneity. Thus, the Engel curve relationship is decreasing
for food at home and increasing for leisure. Moreover, we find evidence of
the presence of time effects and endogeneity for leisure, but not for food
at home. These finding are consistent with consumer preferences where food
at home is a necessity good with little effect on the marginal allocation of
income between consumption and savings. Leisure, on the other hand, is a
superior good that affects the marginal allocation of income between
consumption and savings. The Engel curve relationship is stable over time
for food at home, whereas it is sensitive to changes over time in
preferences and relative prices for leisure.

\appendix

\section{Proof of Theorem \protect\ref{theorem:mean}}

It follows from the differentiability of $\phi (x,u)$ and $f( u|x_{1},x_{2})$
and the dominance condition that 
\begin{equation*}
\tilde{M}(x, x_1, x_2) : =\int \phi (x,u)f(u|x_{1},x_{2})du
\end{equation*}
is continuously differentiable in $(x^{\prime }, x_1^{\prime }, x_2^{\prime
})^{\prime }$ in a neighborhood of $(x^{\prime }, x^{\prime }, x^{\prime })$%
, and that the order of differentiation and integration can be interchanged.
Furthermore, by the structure of the model and Assumption \ref{ass:distrinv}%
, for ${\boldsymbol{x}} = (x_1^{\prime }, x_2^{\prime })^{\prime }$, 
\begin{equation*}
M_{t}({\boldsymbol{x}})=E\big(\phi(X_{t}, U_t)|X_{1}=x_{1},X_{2}=x_{2}\big)=%
\tilde{M}(x_{t},x_{1},x_{2}).
\end{equation*}
Therefore, $M_{t}({\boldsymbol{x}})$ is continuously differentiable in ${%
\boldsymbol{x}},$ $t=1,2,$ in a neighborhood of $(x^{\prime }, x^{\prime
})^{\prime }$, and for $s=1,2$, 
\begin{align}  \label{eq:hilfproofmean}
\begin{split}
\partial_{x_s} M_{t}({\boldsymbol{x}}) &=\big( 1(s=t)\partial_x \tilde{M}(x,
x_1, x_2)+ \partial_{x_s} \tilde{M}(x, x_1, x_2)\big)\big|_{(x, x_1,
x_2)=(x_{t},x_{1},x_{2})} \\
&=\int \big( 1(s=t)\partial_x \phi (x_{t},u) f(u|{\boldsymbol{x}})+\phi
(x_{t},u) \partial_{x_s} f(u|{\boldsymbol{x}})\big) du \\
&=E\Big(\big(1(s=t)\partial_x \phi(x_{t},U_{t})+ \phi (x_{t},U_{t})h_s(U_{t}|%
{\boldsymbol{x}})\big) \big|{\boldsymbol{X}}={\boldsymbol{x}}\Big),
\end{split}%
\end{align}%
where $h_{s}(u|x)=f(u|{\boldsymbol{x}})^{-1}\partial_{x_s} f(u|{\boldsymbol{x%
}})$. Subtracting and using $U_{1}|{\boldsymbol{X}}\overset{d}{=}U_{2}|{%
\boldsymbol{X}}$, 
\begin{align*}
\partial_{x_2} M_{2}({\boldsymbol{x}}) - \partial_{x_2} M_{1}({\boldsymbol{x}%
}) & = E\big(\partial_x \phi(x_{2},U_{2})\big|{\boldsymbol{X}}={\boldsymbol{x%
}}\big) \\
& + E \Big( \big(\phi (x_{2},U_{2}) - \phi (x_{1},U_{2}) \big)\, h_2(U_{2}|{%
\boldsymbol{x}})\big) \big|{\boldsymbol{X}}={\boldsymbol{x}}\Big).
\end{align*}
Evaluating at ${\boldsymbol{x}}= (x^{\prime }, x^{\prime })^{\prime }$ gives
(\ref{eq:resultmean1}), and (\ref{eq:resultmean2}) follows similarly by
considering $\partial_{x_1} M_{1}({\boldsymbol{x}}) - \partial_{x_1} M_{2}({%
\boldsymbol{x}})$, using (\ref{eq:hilfproofmean}) and evaluating at ${%
\boldsymbol{x}}= (x^{\prime }, x^{\prime })^{\prime }$. \qed

\section{Proof of Lemma \protect\ref{lemm:quantilemain}}

Let $F_E(e |h ,{\boldsymbol{x}})=\Pr (E \leq e |H = h ,{\boldsymbol{X}}={%
\boldsymbol{x}})=\int_{-\infty }^{e} f_E(\epsilon|h ,{\boldsymbol{x}})\,
d\epsilon.$ Then by the fundamental theorem of calculus, $F_E(e|h ,{%
\boldsymbol{x}})$ is differentiable in $e$ with derivative $f_E(e|h ,{%
\boldsymbol{x}})$ that is continuous in $e $ and ${\boldsymbol{x}}$.
Consider 
\begin{align}  \label{eq:hilflemone}
\begin{split}
& \Pr (\phi (x, U_t) \leq y|{\boldsymbol{X}}={\boldsymbol{x}}) =\int 1(\phi
(x,u)\leq y)\,f(u|{\boldsymbol{x}})\,du \\
& =\int \int 1(\phi (x,h, e)\leq y)\, f_E(e| h, {\boldsymbol{x}})\, f(h |{%
\boldsymbol{x}})d e\, dh \\
& =\int \int 1(e \leq \phi^{-1} (x,h, y))\, f_E(e| h, {\boldsymbol{x}})\,
f(h |{\boldsymbol{x}})d e\, dh \\
& =\int F_E(\phi ^{-1}(x,h ,y)|h ,{\boldsymbol{x}})f(h |{\boldsymbol{x}}%
)\,dh .
\end{split}%
\end{align}
By the inverse and implicit function theorems, $\phi^{-1}(x,h ,y)$ is
continuously differentiable in $x$ and $y,$ with 
\begin{eqnarray*}
\partial_y \phi ^{-1}(x,h ,y) &=&\big[\partial_e \phi (x,h ,\phi ^{-1}(x,h
,y)) \big]^{-1}, \\
\partial_x \phi^{-1}(x,h ,y) &=&-\frac{\partial_x \phi (x,h ,\phi^{-1}(x,h
,y))}{\partial_e \phi (x,h ,\phi^{-1}(x,h ,y))} \\
& = & - \partial_x \phi (x,h ,\phi^{-1}(x,h ,y))\, \partial_y \phi ^{-1}(x,h
,y).
\end{eqnarray*}
Then by Assumption \ref{ass:regquantile} both $\partial_y \phi^{-1}(x,h ,y)$
and $\partial_x \phi^{-1}(x,h ,y)$ are continuous in $y$ and $x$ and
bounded. Therefore, 
\begin{align}  \label{eq:hilflemtwo}
\begin{split}
\partial_y F_E (\phi^{-1}(x,h ,y)|h ,{\boldsymbol{x}}) &=f_E(\phi^{-1}(x,h
,y)|h ,{\boldsymbol{x}})\, \partial_y \phi ^{-1}(x,h ,y) \\
& =f_{Y_{x}|{\boldsymbol{X}},H_t }(y|{\boldsymbol{x}},h ), \\
\partial_x F_E(\phi^{-1}(x,h ,y)|h ,{\boldsymbol{x}}) &=f_E(\phi^{-1}(x,h
,y)|h ,{\boldsymbol{x}})\, \partial_x \phi^{-1}(x,h ,y) \\
&=-f_{Y_{x}|{\boldsymbol{X}},H_t}(y|{\boldsymbol{x}},h ) \, \partial_x \phi
(x,h ,\phi^{-1}(x,h ,y)),
\end{split}%
\end{align}
are both bounded and continuous in $y,x$ and ${\boldsymbol{x}},$ where the
last equality in each equation follows by a standard change of variables
argument. From the boundedness assumptions on $f_E$ and on $\partial_x \phi$
in Assumption \ref{ass:regquantile}, it follows that $\Pr (\phi (x,U_t)\leq
y|{\boldsymbol{X}}={\boldsymbol{x}})$ is partially differentiable in $y$ and 
$x$ with partial derivatives continuous in $y,x,{\boldsymbol{x}}$, which can
be computed by differentiating under the integral in (\ref{eq:hilflemone}).
In order to establish the expressions in the lemma, insert (\ref%
{eq:hilflemtwo}) into the partial derivatives of (\ref{eq:hilflemone})
w.r.t. $y$ and $x$, and note that $f_{Y_{x}|{\boldsymbol{X}},H_t}(y|{%
\boldsymbol{x}},h ) f(h |{\boldsymbol{x}}) = f_{Y_{x}, H_t|{\boldsymbol{X}}%
}(y,h |{\boldsymbol{x}})$. The first expression is then immediate. For the
second, note that given $Y_x=y$ (for a fixed $x$), $E_t = \phi^{-1}(x, H_t,
y)$, so that 
\begin{align*}
& f_{Y_{x}|{\boldsymbol{X}}}(y|{\boldsymbol{x}})\,E\big(\partial_x \phi
(x,U_t)|Y_{x}=y,{\boldsymbol{X}}={\boldsymbol{x}}\big) \\
= & f_{Y_{x}|{\boldsymbol{X}}}(y|{\boldsymbol{x}})\,\int \partial_x \phi(x,
h, \phi^{-1}(x,h,y))\, f_{H_t | Y_{x}, {\boldsymbol{X}}}(h| y,{\boldsymbol{x}%
}) \, dh \\
= & \int \partial_x \phi(x, h, \phi^{-1}(x,h,y))\, f_{Y_{x}, H_t | {%
\boldsymbol{X}}}(y, h|{\boldsymbol{x}}) \, dh.
\end{align*}
\qed

\section{Proof of Theorem \protect\ref{theorem:quant}}

Let ${\boldsymbol{z}}=(y,x^{\prime },{\boldsymbol{x}}^{\prime })^{\prime }$
and let $H({\boldsymbol{z}})=\Pr (\phi (x,U_{t})\leq y|{\boldsymbol{X}}={%
\boldsymbol{x}})$. From Lemma \ref{lemm:quantilemain} it follows that $H({%
\boldsymbol{z}})$ is differentiable in $y$ and $x$ with partial derivatives
continuous in ${\boldsymbol{z}}$.

From (\ref{eq:dommain}), it follows that $H({\boldsymbol{z}})$ is also
differentiable in ${\boldsymbol{x}}$ with 
\begin{equation*}
\partial_{\boldsymbol{x}} H({\boldsymbol{z}})=\int 1(\phi (x,u)\leq y)
\partial_{\boldsymbol{x}} f(u|{\boldsymbol{x}}) du
\end{equation*}
which is continuous in ${\boldsymbol{z}}$. Thus, $H({\boldsymbol{z}})$ is
continuously differentiable in ${\boldsymbol{z}}$, and the derivative w.r.t.~%
$y$ is strictly positive (see the expression in Lemma \ref{lemm:quantilemain}%
). From the implicit function theorem, there is a unique solution $%
Q_{t}(\tau | {\boldsymbol{x}})$, ${\boldsymbol{x}} = (x_1^{\prime },
x_2^{\prime })^{\prime }$, to 
\begin{equation*}
\tau =H(Q_{t}(\tau | {\boldsymbol{x}}),x_{t},{\boldsymbol{x}}), \qquad t=1,2.
\end{equation*}
which is differentiable with partial derivatives 
\begin{eqnarray*}
\partial_{x_s} Q_{t}(\tau | {\boldsymbol{x}}) =- \big(\partial_y
H(Q_{t}(\tau | {\boldsymbol{x}}),x_{t},{\boldsymbol{x}})\big)^{-1}\, \Big( %
1(s=t) \partial_x H(Q_{t}(\tau | {\boldsymbol{x}}),x_{t},{\boldsymbol{x}})+
\partial_{x_s} H(Q_{t}(\tau | {\boldsymbol{x}}),x_{t},{\boldsymbol{x}})\Big),
\end{eqnarray*}
where $\partial_{x_s} H\big(y,x,{\boldsymbol{x}})$ is the partial derivative
w.r.t.~the components of $x_s$ in ${\boldsymbol{x}} = (x_1^{\prime },
x_2^{\prime })^{\prime }$, $s=1,2$. Evaluating at $x_1 = x_2 = x$,
subtracting and plugging in the expressions for the derivatives from Lemma %
\ref{lemm:quantilemain} yields (\ref{eq:quantileeffect}). \qed

\section{Proof of Proposition \protect\ref{prop: dquant}}

Fix $x_1$, and set 
\begin{equation*}
H(y, x_2) = P(Y \leq y | X_1 = x_1, X_2 = x_2) = P(g(x_1, x_2, {\boldsymbol{U%
}}) \leq y| X_1 = x_1)
\end{equation*}
by the form of the model and the conditional independence assumption. Below
we show that from Assumption \ref{ass:specialass}, $H(y, x_2)$ is
continuously partially differentiable with derivatives 
\begin{align}  \label{hilf:generalagain}
\begin{split}
\partial_y H( y, x_2) &=f_{Y|X_1, X_2}(y|x_1, x_2), \\
\partial_{x_2} H(y, x_2) &= - f_{Y|X_1, X_2}(y|x_1, x_2)\, E\big(%
\partial_{x_2} g(X_1, X_2,{\boldsymbol{U}}) |Y=y,X_1 = x_1, X_2 = x_2\big).
\end{split}%
\end{align}
From the positivity of the conditional density of $Y$ given $X_1, X_2$ and
the implicit function theorem, the conditional quantile $q(\tau, x_1, x_2)$
given by $H\big(q(\tau, x_1, x_2), x_2 \big) = \tau$ exists and is
differentiable in $x_2$ with derivative 
\begin{align*}
\partial_{x_2} q(\tau, x_1, x_2) & = - \Big(\partial_y H\big( q(\tau, x_1,
x_2), x_2\big) \Big)^{-1} \partial_{x_2} H\big(q(\tau, x_1, x_2), x_2\big) \\
& = E\big(\partial_{x_2} g(X_1, X_2,{\boldsymbol{U}}) |Y=q(\tau, x_1,
x_2),X_1 = x_1, X_2 = x_2\big).
\end{align*}
where we used (\ref{hilf:generalagain}), thus proving the theorem.

It remains to prove (\ref{hilf:generalagain}), which is analogous to Lemma %
\ref{lemm:quantilemain}. Let $F_E(e |h ,{\boldsymbol{x}})=\Pr (E \leq e |H =
h ,X_1 = x_1)=\int_{-\infty }^{e} f_E(e|h ,x_1)\, de$, so that $F_E(e|h
,x_1) $ is differentiable in $e$ with derivative $f_E(e|h ,x_1)$ that is
continuous in $e $. Consider 
\begin{align}  \label{eq:hilflemone2}
\begin{split}
& \Pr (g(x_1, x_2, {\boldsymbol{U}}) \leq y|X_1 = x_1) =\int 1(g(x_1, x_2, {%
\boldsymbol{u}})\leq y)\,f({\boldsymbol{u}}|x_1)\,du \\
& =\int \int 1(g(x_1, x_2, h, e)\leq y)\, f_E(e| h, x_1)\, f(h |x_1)d e\, dh
\\
& =\int \int 1(e \leq g^{-1} (x_1, x_2, h, y))\, f_E(e| h, x_1)\, f(h |x_1)d
e\, dh \\
& =\int F_E(g^{-1} (x_1, x_2, h, y)|h ,x_1)f(h |x_1)\,dh .
\end{split}%
\end{align}
By the inverse and implicit function theorems, $g^{-1} (x_1, x_2, h, y)$ is
continuously differentiable in $x$ and $y,$ with 
\begin{eqnarray*}
\partial_y \big(g^{-1} (x_1, x_2, h, y)\big) &=&\big(\partial_e g(x_1, x_2,
h, g^{-1} (x_1, x_2, h, y)) \big)^{-1}, \\
\partial_{x_2} \big(g^{-1} (x_1, x_2, h, y)\big) &=&-\frac{\partial_{x_2}
g(x_1, x_2, h, g^{-1} (x_1, x_2, h, y))}{\partial_e g(x_1, x_2, h, g^{-1}
(x_1, x_2, h, y))} \\
& = & - \partial_{x_2} g(x_1, x_2, h, g^{-1} (x_1, x_2, h, y))\, \partial_y %
\big(g^{-1} (x_1, x_2, h, y)\big).
\end{eqnarray*}
Then by Assumption \ref{ass:specialass} both $\partial_y \big(g^{-1} (x_1,
x_2, h, y)\big) $ and $\partial_{x_2} \big(g^{-1} (x_1, x_2, h, y)\big)$ are
continuous in $y$ and $x_2$ and bounded. Therefore, 
\begin{align}  \label{eq:hilflemtwo2}
\begin{split}
\partial_y \big(F_E (g^{-1} (x_1, x_2, h, y)|h ,x_1)\big) &=f_E(g^{-1} (x_1,
x_2, h, y)|h ,x_1)\, \partial_y \big(g^{-1} (x_1, x_2, h, y)\big) \\
& =f_{Y|X_1, X_2,H }(y|x_1, x_2,h ), \\
\partial_{x_2} \big(F_E(g^{-1} (x_1, x_2, h, y)|h ,{\boldsymbol{x}})\big) %
&=f_E(g^{-1} (x_1, x_2, h, y)|h ,{\boldsymbol{x}})\, \partial_{x_2} \big(%
g^{-1} (x_1, x_2, h, y)\big) \\
&=-f_{Y|X_1, X_2,H }(y|x_1, x_2,h )\, \partial_{x_2} g(x_1, x_2, h, g^{-1}
(x_1, x_2, h, y)),
\end{split}%
\end{align}
are both bounded and continuous in $y,x$ and ${\boldsymbol{x}},$ where the
last equality in each equation follows by a change of variables argument
together with conditional independence of $X_2$ and $(H, E)$ given $X_1$.
From the boundedness assumptions on $f_E$ and on $\partial_{x_2} g$ in
Assumption \ref{ass:specialass}, it follows that $H(y, x_2)$ is partially
differentiable with continuous partial derivatives which can be computed by
differentiating under the integral in (\ref{eq:hilflemone2}). Now insert (%
\ref{eq:hilflemtwo2}) into (\ref{eq:hilflemone2}) and note that $f_{Y|X_1,
X_2,H}(y|x_1, x_2,h ) f(h |x_1) = f_{Y, H|X_1, X_2}(y,h |x_1, x_2)$ by
conditional independence. The first expression is then immediate. For the
second, note that given $Y=y$ (for a fixed $x_1, x_2$), $E = g^{-1}(x_1,
x_2, H, y)$, so that 
\begin{align*}
& f_{Y|X_1, X_2}(y|x_1, x_2)\,E\big(\partial_{x_2} g(x_1, x_2,{\boldsymbol{U}%
})|Y=y,X_1 = x_1, X_2 = x_2\big) \\
= & f_{Y|X_1, X_2}(y|x_1, x_2)\,\int \partial_{x_2} g(x_1, x_2, h,
g^{-1}(x_1, x_2,h,y))\, f_{H | Y, X_1, X_2}(h| y,x_1, x_2) \, dh \\
= & \int \partial_{x_2} g(x_1, x_2, h, g^{-1}(x_1, x_2,h,y))\, f_{Y, H |
X_1, X_2}(y, h|x_1, x_2) \, dh.
\end{align*}
\qed

\section{Proof of Theorem \protect\ref{th:casf_te}}

The first result follows by direct calculation because 
\begin{equation*}
V_t(x,x) = \sigma_t^2(x) \text{Var}[\phi(x,U_t) \mid X_1 = X_2 = x],
\end{equation*}
and $\text{Var}[\phi(x,U_t) \mid X_1 = X_2 = x]$ does not depend on $t$ by
Assumption \ref{ass:distrinv}.

For the second result, note that 
\begin{equation*}
E[Y_2 - \sigma(x) Y_1 \mid X_1 = X_2 =x] = M_2(x,x) - \sigma(x) M_{1}(x,x).
\end{equation*}
Then the result follows by direct calculation because 
\begin{equation*}
M_t(x,x) = \mu_t(x) + \sigma_t(x) E[\phi(x,U_t) \mid X_1 = X_2 = x], \ \ 
\end{equation*}
and $E[\phi(x,U_t) \mid X_1 = X_2 = x]$ does not depend on $t$ by Assumption %
\ref{ass:distrinv}.

The proof of the third result is similar to the proof of Theorem \ref%
{theorem:mean} replacing $\phi(x,u)$ by $\phi_t(x,u) = \mu_t(x) +
\sigma_t(x) \phi(x,u)$. In particular, ${\boldsymbol{x}} \mapsto M_t({%
\boldsymbol{x}})$ is continuously differentiable, $t=1,2$, in a neighborhood
of $(x^{\prime },x^{\prime })^{\prime },$ and for $s=1,2,$ 
\begin{equation*}
\partial_{x_s} M_t({\boldsymbol{x}}) = E\Big(1(s=t)\partial_x
\phi_t(x_{t},U_{t})+ \sigma_t(x_t) \phi (x_{t},U_{t})h_s(U_{t}|{\boldsymbol{x%
}}) \big|{\boldsymbol{X}}={\boldsymbol{x}}\Big),
\end{equation*}
where $h_{s}(u|{\boldsymbol{x}})=f(u|{\boldsymbol{x}})^{-1}\partial_{x_s}
f(u|{\boldsymbol{x}})$ and $\partial_x \phi_t(x,u) = \partial_x \mu_t(x) +
\partial_x \sigma_t(x) \phi(x,u) + \sigma_t(x) \partial_x \phi(x,u)$.
Subtracting and using Assumption \ref{ass:distrinv}, 
\begin{align*}
\partial_{x_2} M_{2}({\boldsymbol{x}}) - \sigma(x) \partial_{x_2} M_{1}({%
\boldsymbol{x}}) & = E\big(\partial_x \phi_t(x_{2},U_{2})\big|{\boldsymbol{X}%
}={\boldsymbol{x}}\big) \\
& + E \Big( \big(\sigma_2(x_2) \phi (x_{2},U_{2}) - \sigma(x) \sigma_1(x_1)
\phi (x_{1},U_{2}) \big)\, h_2(U_{2}|{\boldsymbol{x}})\big) \big|{%
\boldsymbol{X}}={\boldsymbol{x}}\Big).
\end{align*}
Evaluating at ${\boldsymbol{x}}= (x^{\prime }, x^{\prime })^{\prime }$ gives 
\begin{equation*}
\partial_{x_2} M_{2}(x,x) - \sigma(x) \partial_{x_2} M_{1}(x,x) = E\big(%
\partial_x \phi_2(x,U_{2})\big| X_1 = X_2 = x \big).
\end{equation*}
A similar argument yields 
\begin{equation*}
\partial_{x_1} M_{1}(x,x) - \partial_{x_1} M_{2}(x,x)/\sigma(x) = E\big(%
\partial_x \phi_1(x,U_{1})\big| X_1 = X_2 = x \big).
\end{equation*}
The result follows by averaging the previous expressions and using that $E%
\big(\partial_x \phi(x,U_{t})\big| X_1 = X_2 = x \big)$ does not depend on $%
t $ by Assumption \ref{ass:distrinv}. \qed

\section{Proof of Theorem \protect\ref{th:cqsf_te}}

Let ${\boldsymbol{z}}=(y,x^{\prime },{\boldsymbol{x}}^{\prime })^{\prime }$
and let $H_t({\boldsymbol{z}})=\Pr (\phi_t (x,U_{t})\leq y|{\boldsymbol{X}}={%
\boldsymbol{x}})$ where $\phi_t (x,u) = \mu_t(x) + \sigma_t(x) \phi(x,u)$.
The first result follows by a similar argument to the proof of Theorem \ref%
{theorem:quant}. In particular, by Lemma \ref{lemm:quantilemain} and (\ref%
{eq:dommain_te}), ${\boldsymbol{z}} \mapsto H_t({\boldsymbol{z}})$ is
continuously differentiable with derivatives 
\begin{align*}
\partial_y H_t({\boldsymbol{z}}) &=f_{\phi_t(x,U_t)|{\boldsymbol{X}}}(y|{%
\boldsymbol{x}}), \\
\partial_x H_t({\boldsymbol{z}}) &= - f_{\phi_t(x,U_t)|{\boldsymbol{X}}}(y|{%
\boldsymbol{x}}) E\big(\partial_x \phi_t (x,U_t) |\phi_t(x,U_t)=y,{%
\boldsymbol{X}}={\boldsymbol{x}}\big), \\
\partial_{\boldsymbol{x}} H({\boldsymbol{z}}) &= \int 1(\phi_t (x,u)\leq y)
\partial_{\boldsymbol{x}} f(u|{\boldsymbol{x}}) du.
\end{align*}
Thus, ${\boldsymbol{z}} \mapsto H_t({\boldsymbol{z}})$ is continuously
differentiable with positive derivative with respect to $y$. By the implicit
function theorem, there is a unique solution $Q_{t}(\tau|{\boldsymbol{x}})$
to $\tau = H_t(Q_t(\tau|{\boldsymbol{x}}), x_t, {\boldsymbol{x}}),$ $t=1,2,$
which is differentiable with partial derivatives 
\begin{equation*}
\partial_{x_s} Q_{t}(\tau | {\boldsymbol{x}}) =- \big(\partial_y
H_t(Q_{t}(\tau | {\boldsymbol{x}}),x_{t},{\boldsymbol{x}})\big)^{-1}\, \Big( %
1(s=t) \partial_x H_t(Q_{t}(\tau | {\boldsymbol{x}}),x_{t},{\boldsymbol{x}}%
)+ \partial_{x_s} H_t(Q_{t}(\tau | {\boldsymbol{x}}),x_{t},{\boldsymbol{x}})%
\Big),
\end{equation*}
where $\partial_{x_s} H_t\big(y,x,{\boldsymbol{x}})$ is the partial
derivative w.r.t.~the components of $x_s$ in ${\boldsymbol{x}} =
(x_1^{\prime }, x_2^{\prime })^{\prime }$, $s=1,2$.

Evaluating at $x_1 = x_2 = x$, and plugging in the expressions for the
derivatives yields 
\begin{multline*}
\partial_{x_s} Q_{t}(\tau | {\boldsymbol{x}}) = 1(s=t)E\big(\partial_x
\phi_t (x,U_t) |\phi(x,U_t)=q(\tau,x),X_1=X_2 = x\big) \\
- [f_{\phi(x,U_t)|{\boldsymbol{X}}}(q(\tau,x)|{\boldsymbol{x}}%
)/\sigma_t(x)]^{-1} \int 1(\phi (x,u)\leq q(\tau,x)) \partial_{x_s} f(u|{%
\boldsymbol{x}}) du ,
\end{multline*}
where we use that $Q_{t}(\tau | {\boldsymbol{x}}) = \mu_t(x) + \sigma_t(x)
q(\tau,x)$ by invariance of quantiles to monotone transformations, and $%
f_{\phi_t(x,U_t)|{\boldsymbol{X}}}(y|{\boldsymbol{x}}) = f_{\phi(x,U_t)|{%
\boldsymbol{X}}}([y - \mu_t(x)]/\sigma_t(x) |{\boldsymbol{x}}) / \sigma_t(x)$
by a change of variables. Subtracting and using Assumption \ref{ass:distrinv}
\begin{equation*}
\partial_{x_1} Q_{1}(\tau | {\boldsymbol{x}}) - \sigma(x)^{-1}
\partial_{x_1} Q_{2}(\tau | {\boldsymbol{x}}) = E\big(\partial_x \phi_1
(x,U_1) |\phi(x,U_1)=q(\tau,x),X_1=X_2 = x\big),
\end{equation*}
and 
\begin{equation*}
\partial_{x_2} Q_{2}(\tau | {\boldsymbol{x}}) - \sigma(x) \partial_{x_2}
Q_{1}(\tau | {\boldsymbol{x}}) = E\big(\partial_x \phi_2 (x,U_2)
|\phi(x,U_2)=q(\tau,x),X_1=X_2 = x\big).
\end{equation*}
The result then follows by averaging the previous expressions, using $%
\partial_x \phi_t(x,u) = \partial_x \mu_t(x) + \partial_x \sigma_t(x)
\phi(x,u) + \sigma_t(x) \partial_x \phi(x,u)$ and that $E\big(\partial_x
\phi (x,U_t) |\phi(x,U_t)=q(\tau,x),X_1=X_2 = x\big)$ does not depend on $t$
by Assumption \ref{ass:distrinv}.

The second and third results follow from Assumption \ref{ass:distrinv} by
direct calculation because 
\begin{equation*}
Q_t( \tau \mid x,x) = \mu_t(x) + \sigma_t(x) q(\tau,x).
\end{equation*}
\qed

\newpage


\begin{figure}

\begin{center}

\centering\epsfig{figure=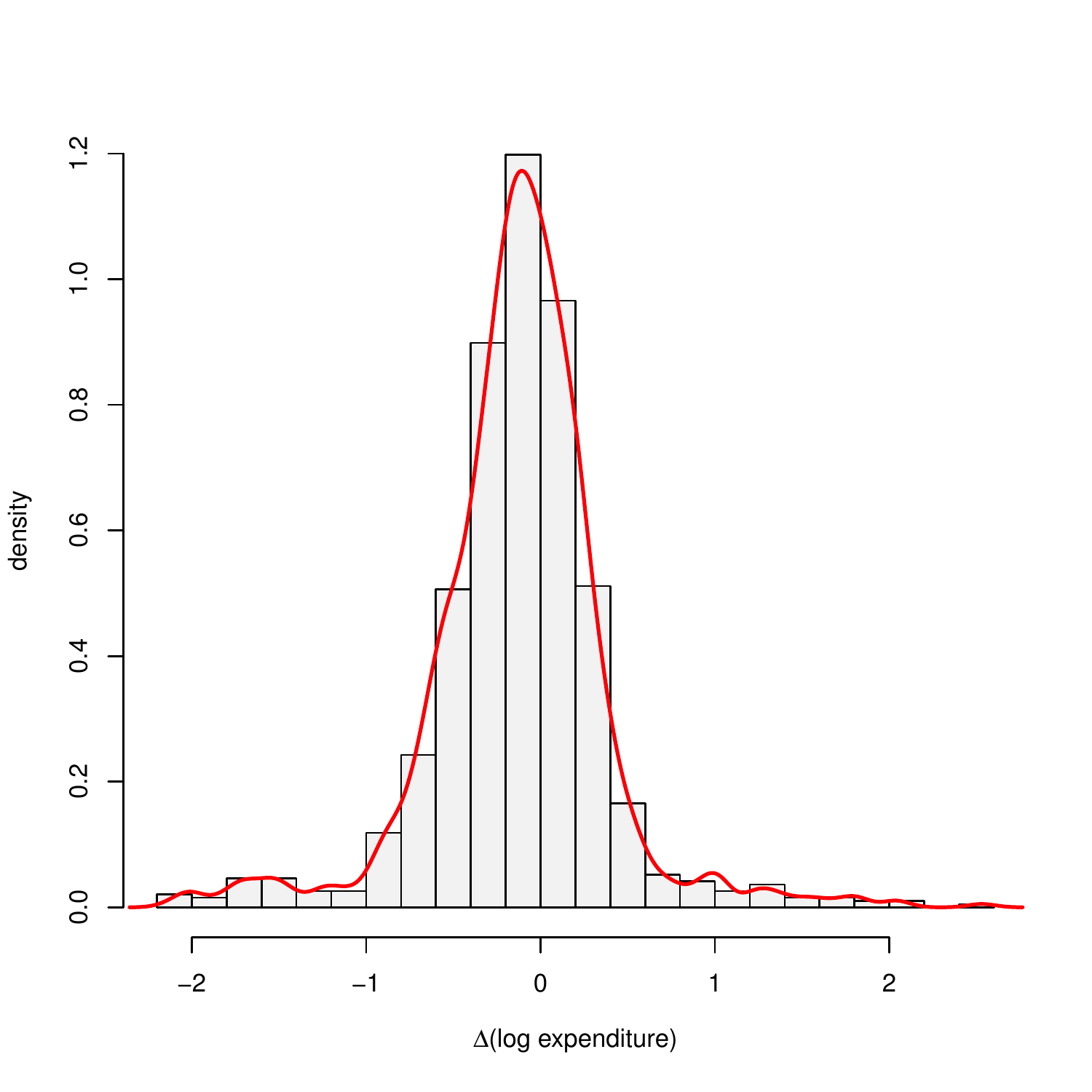,width=6.5in,height=6.5in,angle=0}

\caption{\label{fig:density} Density of the change in log-expenditure between 2007 and 2009.}

\end{center}

\end{figure}



\begin{figure}

\begin{center}

\centering\epsfig{figure=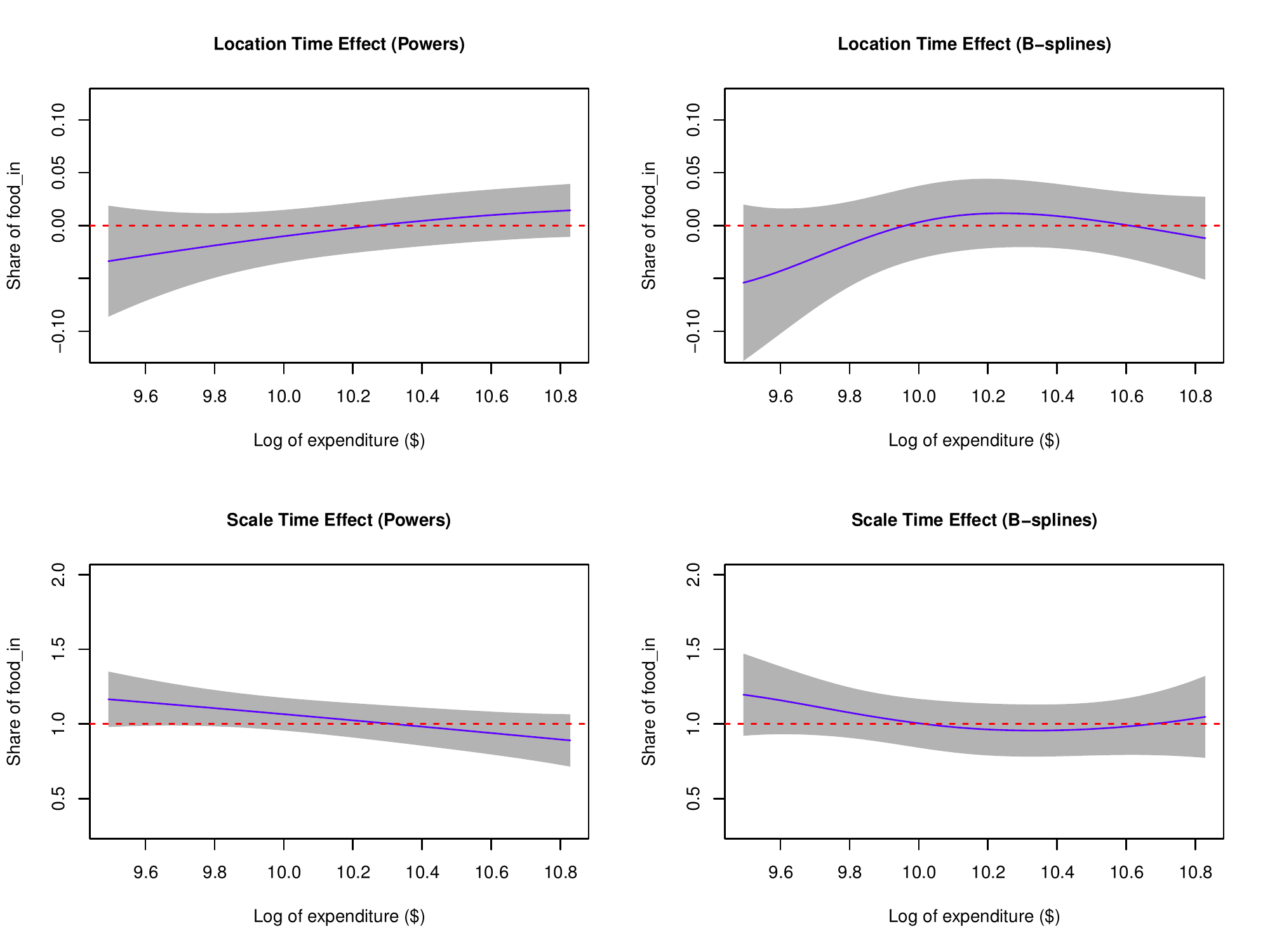,width=6.5in,height=6.5in,angle=-90}

\caption{\label{fig:te2_asf_food_in}Location and scale time effects for food at home share: estimates from conditional quantiles.}

\end{center}

\end{figure}



\begin{figure}

\begin{center}

\centering\epsfig{figure=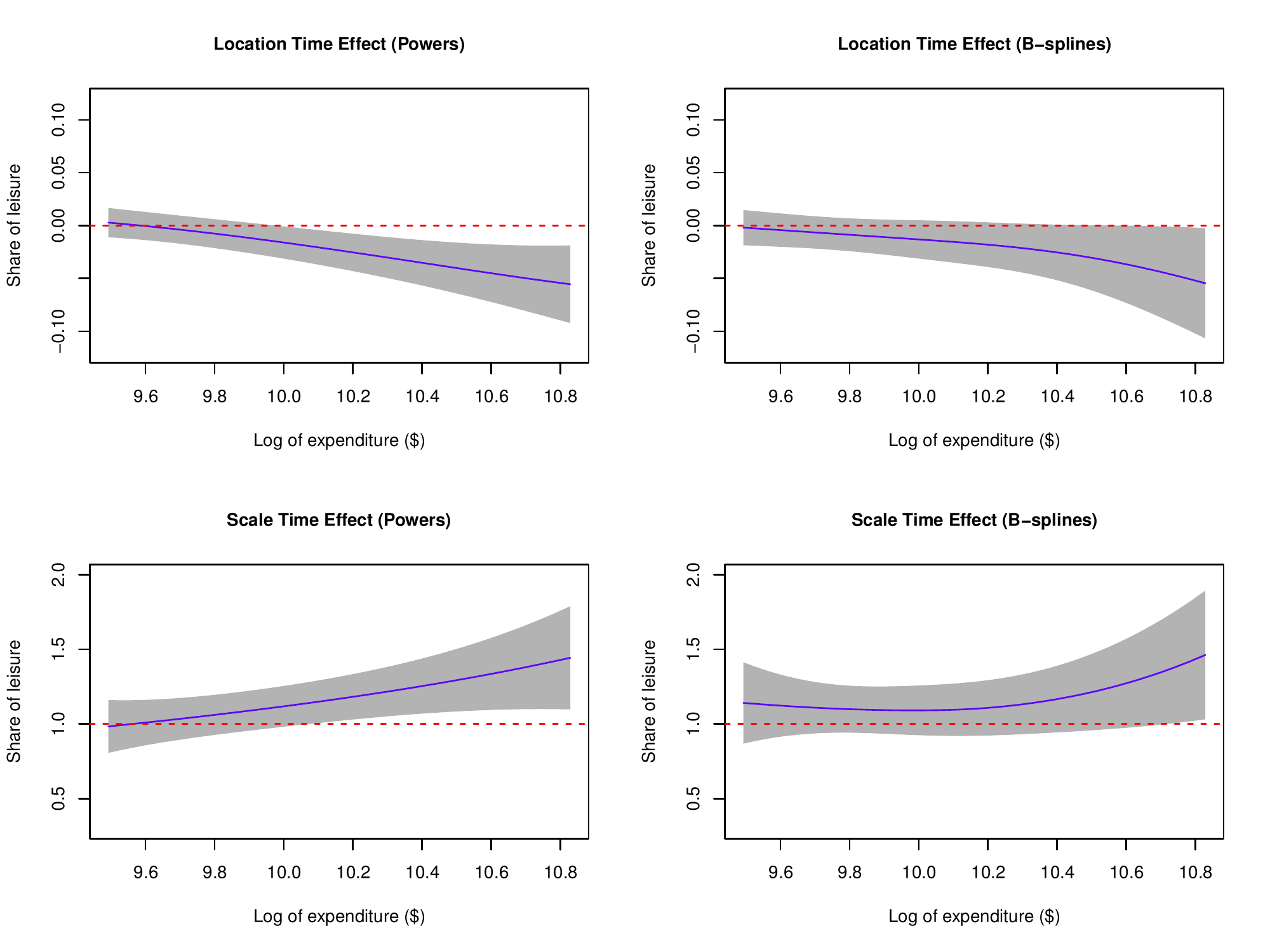,width=6.5in,height=6.5in,angle=-90}

\caption{\label{fig:te2_asf_leisure} Location and scale time effects for leisure share: estimates from conditional quantiles.}

\end{center}

\end{figure}



\begin{figure}

\begin{center}

\centering\epsfig{figure=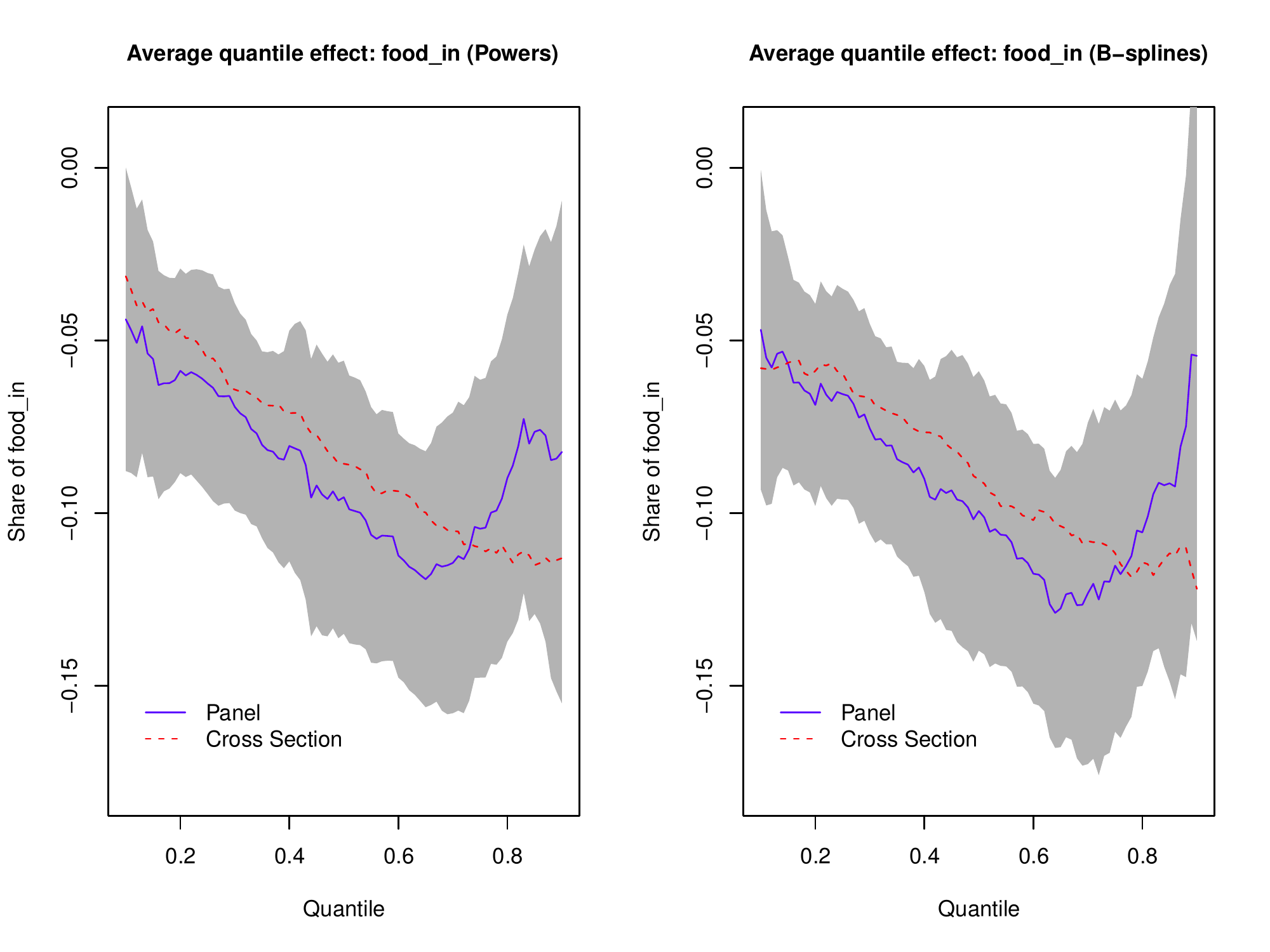,width=3.25in,height=6.5in,angle=-90}

\centering\epsfig{figure=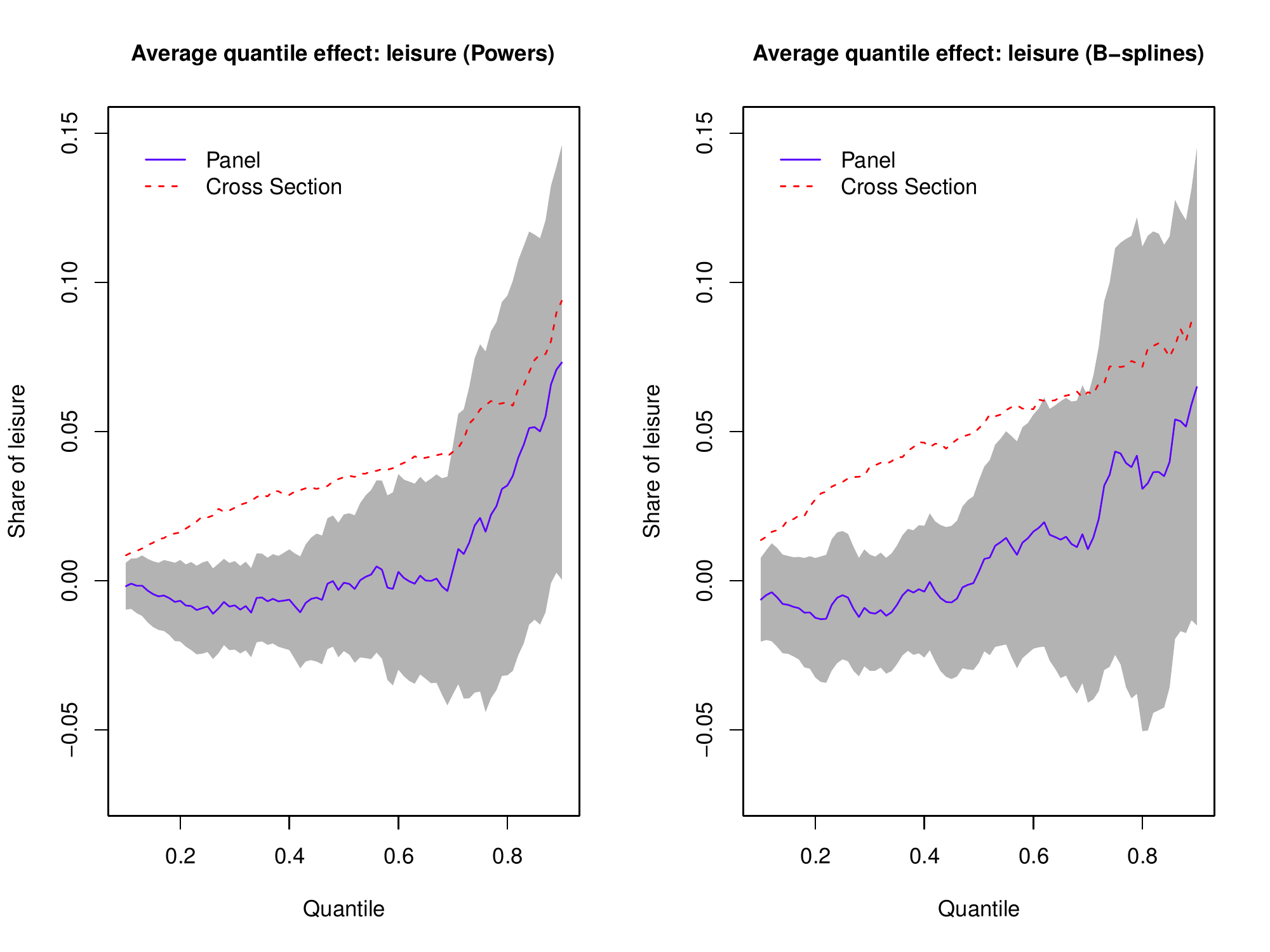,width=3.25in,height=6.5in,angle=-90}

\caption{\label{fig:dqsf}Average conditional quantile effects of log total expenditure.}

\end{center}

\end{figure}



\begin{figure}

\begin{center}

\centering\epsfig{figure=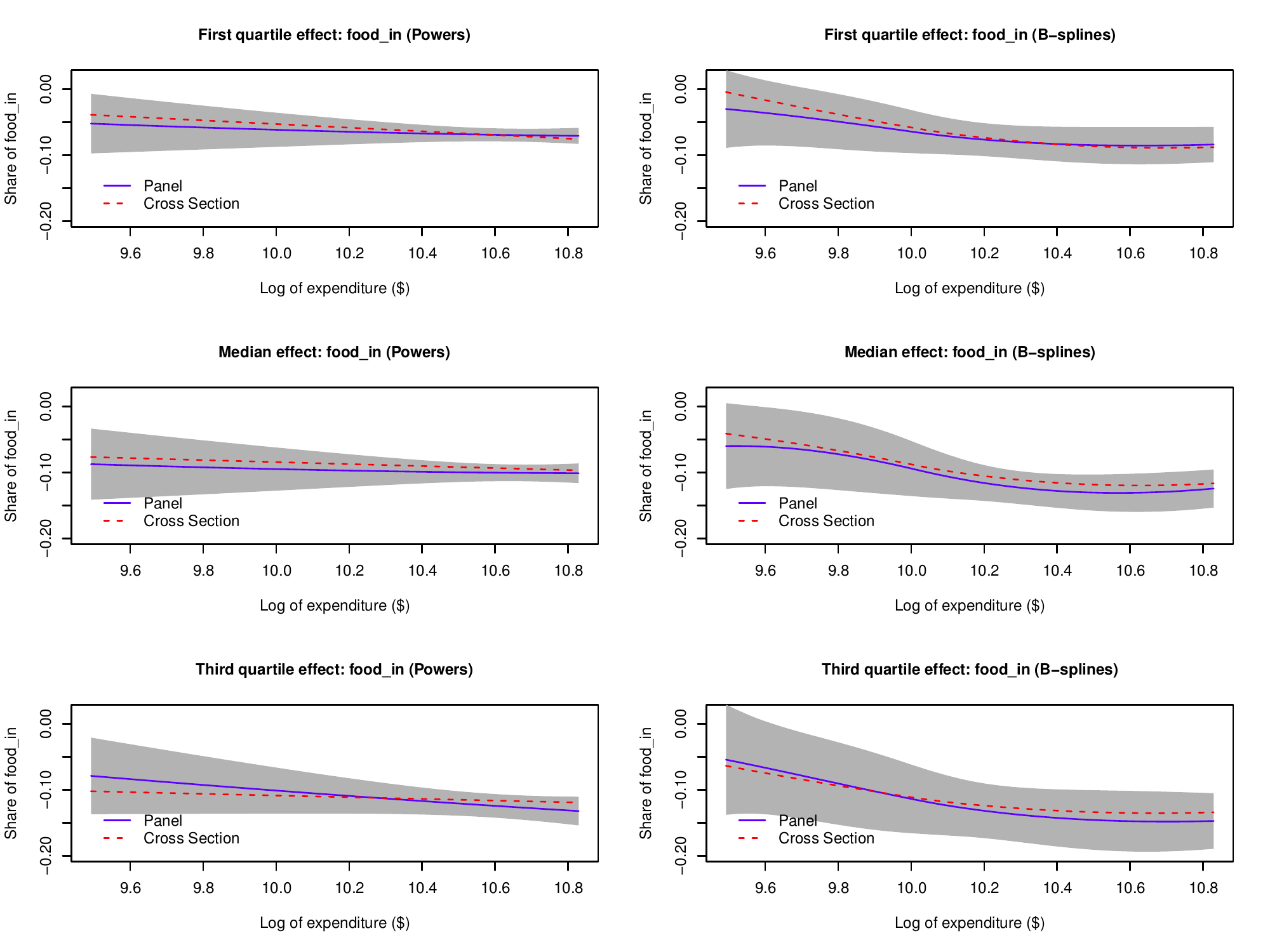,width=8.5in,height=6.5in,angle=-90}

\caption{\label{fig:dqsfs_food_in} Conditional quartile effects of log  total expenditure on food at home share.}

\end{center}

\end{figure}



\begin{figure}

\begin{center}

\centering\epsfig{figure=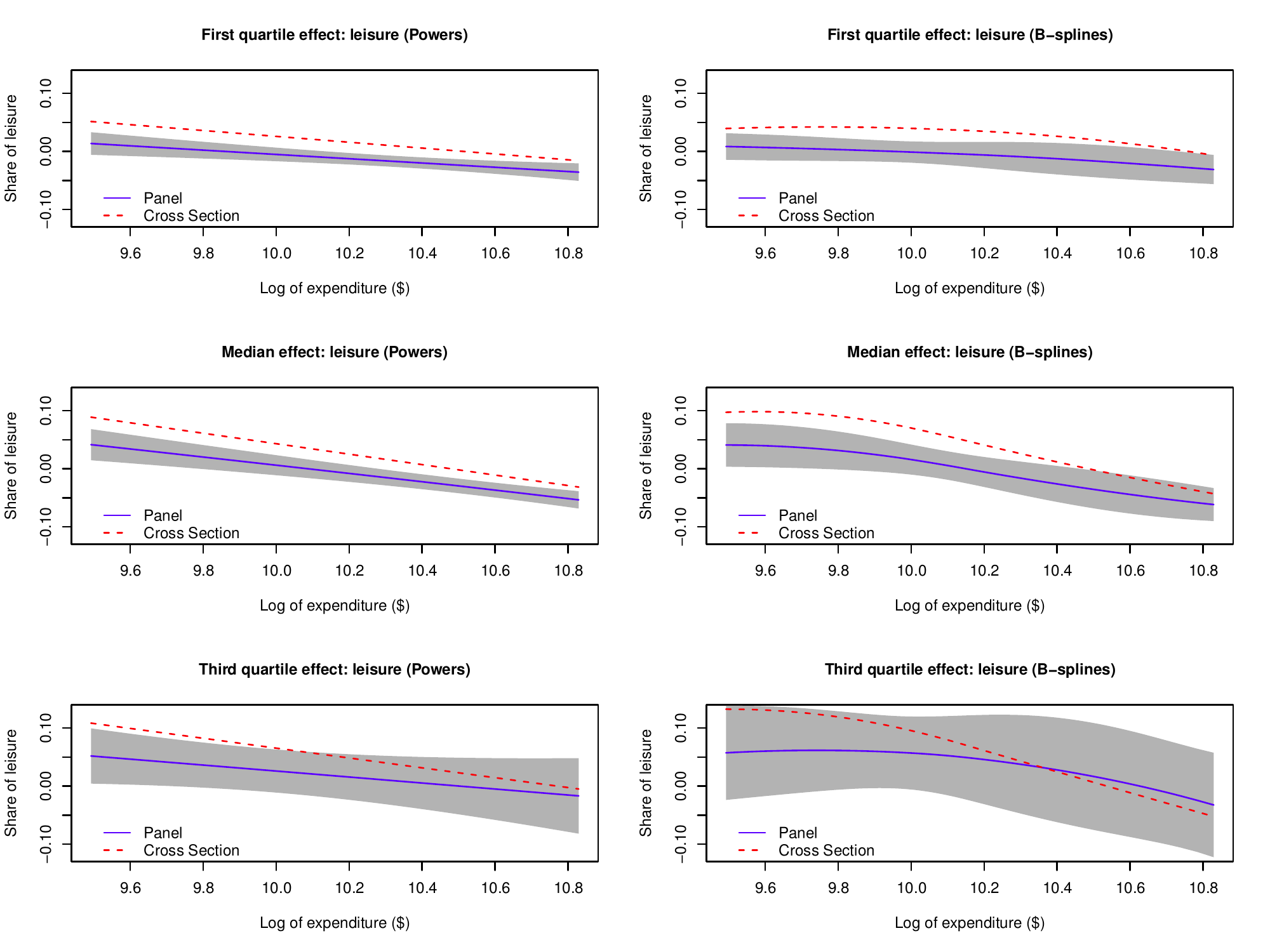,width=8.5in,height=6.5in,angle=-90}

\caption{\label{fig:dqsfs_leisure} Conditional quartile effects of log  total expenditure on leisure share.}

\end{center}

\end{figure}



\begin{figure}

\begin{center}

\centering\epsfig{figure=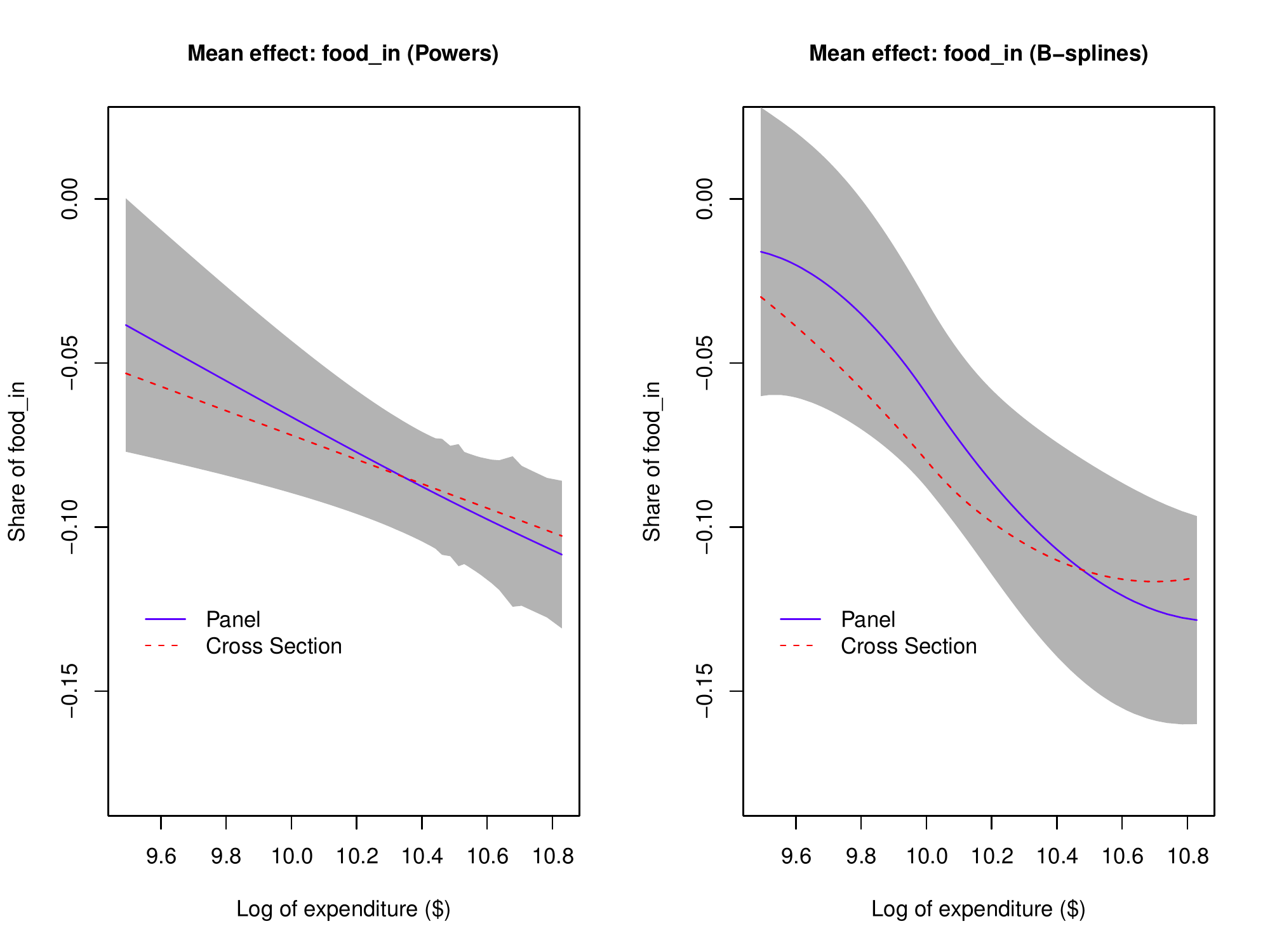,width=3.25in,height=6.5in,angle=-90}

\epsfig{figure=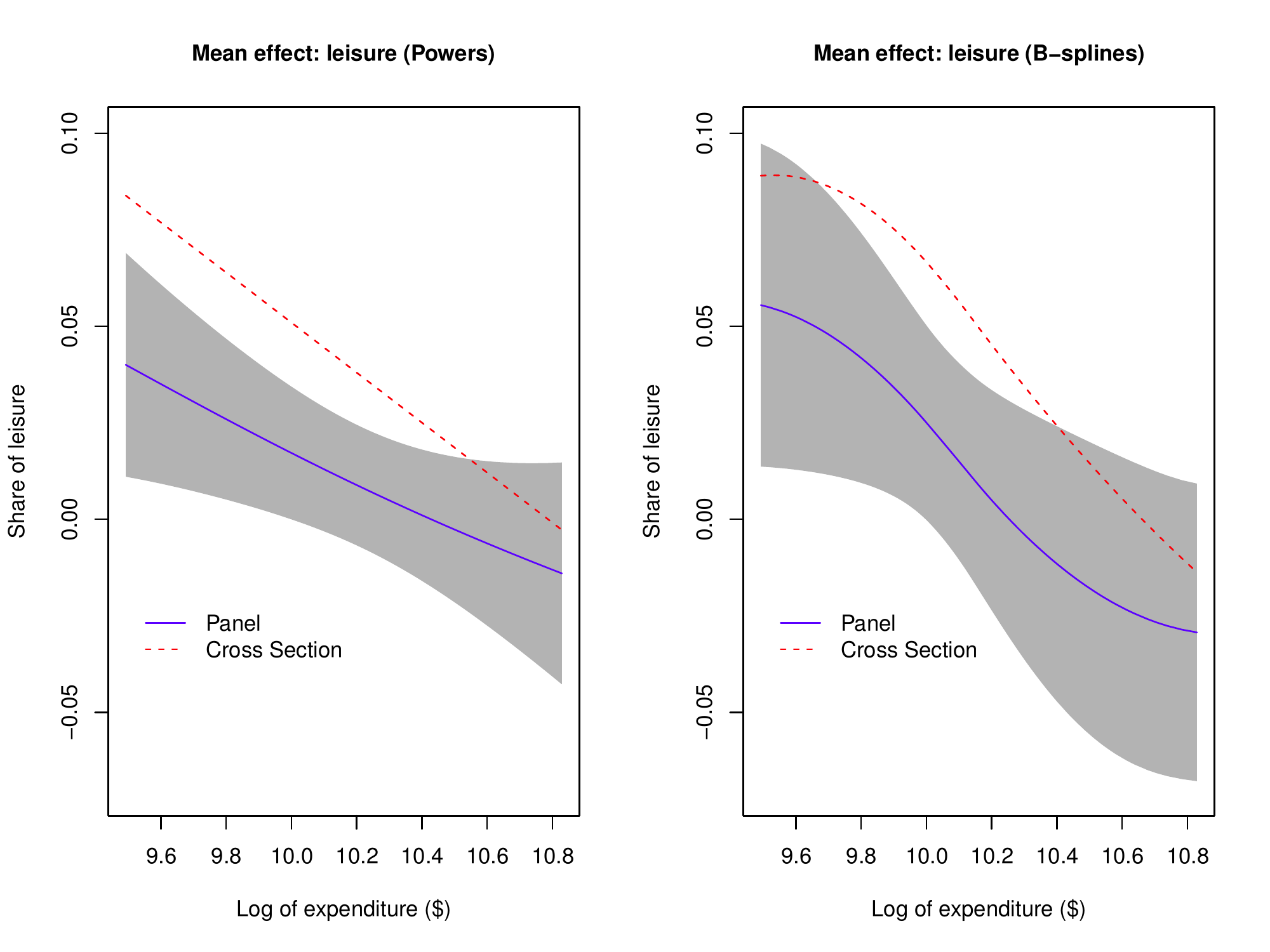,width=3.25in,height=6.5in,angle=-90}

\caption{\label{fig:dasf} Conditional mean effects of log total expenditure.}

\end{center}

\end{figure}


\end{document}